\documentclass[letterpaper]{article} 
\usepackage{aaai2027}  
\usepackage[hyphens]{url}  
\usepackage{graphicx} 
\usepackage{natbib}  
\usepackage{caption} 
\usepackage{algorithm}
\usepackage{algorithmic}
\usepackage{amsmath}
\usepackage{amssymb}
\usepackage{amsthm}
\usepackage{tikz}
\usepackage{multirow}
\usepackage{array}
\usetikzlibrary{positioning,arrows.meta,fit,backgrounds,shapes.geometric,calc}
\newtheorem{theorem}{Theorem}
\newtheorem{lemma}{Lemma}
\newtheorem{corollary}{Corollary}
\newtheorem{proposition}{Proposition}
\newtheorem{definition}{Definition}

\usepackage{booktabs}
\usepackage{placeins}
\title{DP-MemView: A Memory Interface for Attribute-Level Transcript Privacy in Long-Term LLM Agents}
\author{
    Jong Wook Kim\textsuperscript{\rm 1},
    Byoungjae Min\textsuperscript{\rm 1},
    Kennedy Edemacu\textsuperscript{\rm 2},\\
    Yoonhyuk Choi\textsuperscript{\rm 3},
    Sae-Hong Cho\textsuperscript{\rm 4},
    Beakcheol Jang\textsuperscript{\rm 5}
}
\affiliations{
    \textsuperscript{\rm 1}Department of Computer Science, Sangmyung University, Seoul, Republic of Korea\\
    \textsuperscript{\rm 2}College of Staten Island, The City University of New York, New York, NY, USA\\
    \textsuperscript{\rm 3}Sookmyung Women's University, Seoul, Republic of Korea\\
    \textsuperscript{\rm 4}School of Computer Engineering, Hansung University, Seoul, Republic of Korea\\
    \textsuperscript{\rm 5}Graduate School of Information, Yonsei University, Seoul, Republic of Korea\\
    jkim@smu.ac.kr
}

\begin{document}
\nocopyright
\maketitle

\begin{abstract}
Long-term memory enables persistent personalization in LLM agents, but repeated memory-conditioned responses can cumulatively reveal protected attributes even when they are never stated explicitly. We formalize this threat as adaptive transcript privacy and introduce \textsc{DP-MemView}, a differentially private interface that privately selects public response-conditioning views and exposes those views---rather than raw memory---to the response LLM. Each private selection is charged to every protected attribute whose memory group intersects the read set. Per-attribute ledgers block any selection that would exceed its cap and return a fixed generic view instead. Under an explicit interface contract, we prove pure \(B_a\)-DP for the entire adaptive transcript. We also extend the result to stores that differ across multiple protected groups and bound how much observing the transcript can change an adversary's prior odds. We evaluate the online and preallocated modes with three response LLMs on a controlled adjacent-store benchmark and a public-corpus transfer track. Both modes keep transcript distinguishability near chance while preserving target-required personalization and overall response quality. Further diagnostics show that removing key safeguards causes mismatched output support, missing ledger charges, revealing side channels, or growing long-horizon leakage.
\end{abstract}


\section{Introduction}

Long-term memory now lets LLM agents store, retrieve, summarize, and update user-specific context across sessions, enabling persistent personalization and coherent interaction~\cite{park2023generative,packer2023memgpt,zhong2024memorybank,du2026memguide,huang2026mempal,tan2025prospect,dai2026memoryart,li2025hello,kang2025memoryos,xu2025amem,liu2025palace}.
This persistence, however, changes the privacy problem: a sensitive attribute need not appear in any single answer, but may become inferable from many benign-looking, memory-conditioned responses. This risk is becoming practically relevant as assistants connect persistent personal context to third-party applications: ChatGPT exposes applications through its Apps SDK~\cite{openai2025apps}, while Gemini's Personal Intelligence uses data from connected Google apps to personalize responses~\cite{googleGeminiConnectedApps}.

\begin{figure}[t]
\centering
\includegraphics[width=0.95 \columnwidth]{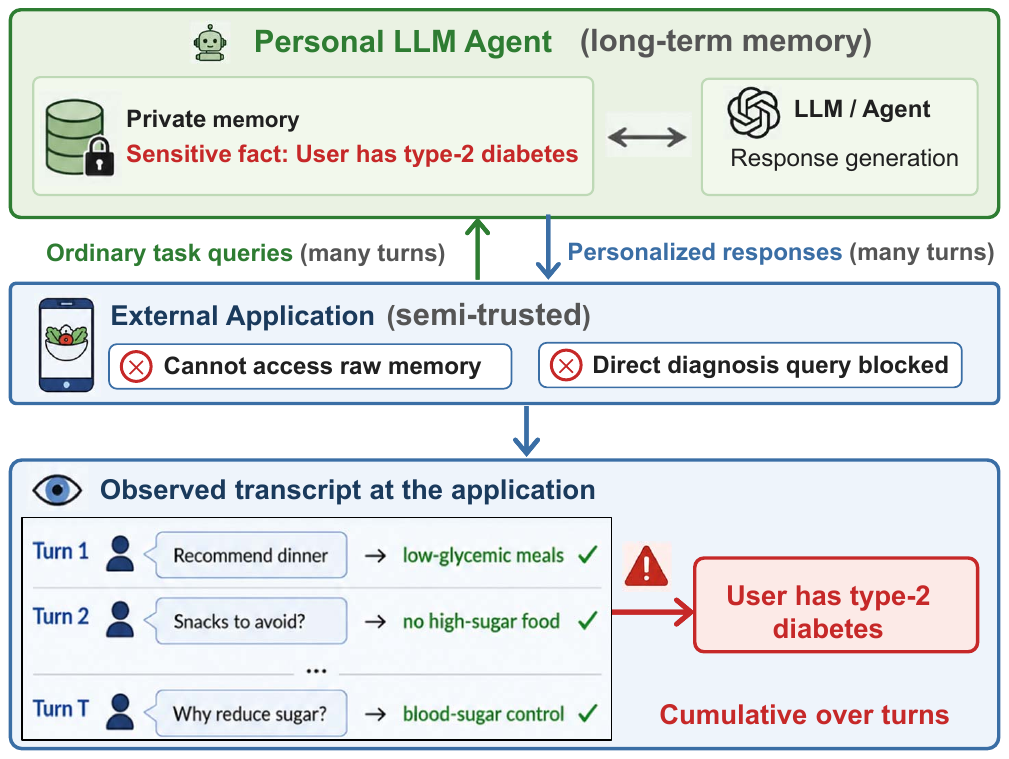}
\caption{Cumulative attribute leakage through benign personalization:
although direct memory access is blocked, repeated responses can jointly reveal the protected attribute.}
\label{fig:motiv}
\end{figure}

Recent work likewise shows that LLM agents may violate contextual privacy not only through explicit disclosure, but also through action trajectories, tool-mediated workflows, and task-conditioned information flows~\cite{mireshghallah2024confaide,shao2024privacylens,bagdasarian2024airgap,wang2025privacyaction}. In this setting, an external application need not read the raw memory store; it can submit task queries and observe the agent's
memory-conditioned responses over time. Thus, long-term memory privacy becomes an app-facing transcript problem rather than a single-output filtering problem.

Figure~\ref{fig:motiv} illustrates this risk. A meal-planning application uses the user's personal LLM agent to obtain personalized meal advice. The application may be blocked from directly asking whether the user has diabetes and may have no access to the raw memory store. Nevertheless, by issuing ordinary meal-planning queries, it can repeatedly observe blood-sugar-conscious recommendations produced from the agent's long-term memory. No individual response need state the diagnosis, yet the transcript can reveal it. Such leakage is a form of indirect disclosure: the sensitive attribute is not necessarily stated, but becomes inferable from repeated, task-relevant personalization signals.  

Existing work addresses related risks: extracting agent memories through
black-box or adaptive queries~\cite{wang2025mextra,lyu2026adam}, masking
sensitive spans~\cite{chen2026memprivacy,wu2026amp}, and protecting
retrieved documents or RAG outputs~\cite{choi2025ragmia,koga2024dprag,
tang2026dprag}. These defenses limit what is directly exposed, but they
do not provide a composable, prior-independent guarantee for privacy loss
accumulated across repeated memory-conditioned releases. We ask whether the memory
interface itself can make such releases compose into a single
differential privacy (DP) guarantee for the full adaptive transcript.

We introduce \textsc{DP-MemView}, a differentially private memory-view
interface. Instead of sending raw memories along the application-facing
generation path, it privately selects response-conditioning views from fixed public
vocabularies shared across adjacent memory stores and passes only those views
to the response-generating LLM. Views encode response-conditioning modes,
not user facts, so the application-facing response is post-processing of
the private view selections.

The view selector alone cannot ensure transcript privacy: failures arise from
store-dependent view support, incomplete charging of overlapping attributes,
raw-memory-dependent orchestration or side channels, and unbounded repetition
of locally private releases. \textsc{DP-MemView} enforces privacy at the
memory-interface boundary. Its interface contract fixes view support; keeps raw memory and internal
scores within the trusted interface; and requires every observable release
either to have the same conditional distribution across adjacent stores or
to be produced by a private selection mechanism, with its cost charged to
every affected attribute; restricts orchestration and controls to store-independent policy and released
state; enforces pathwise caps with a fixed generic fallback; and confines  LLM
responses to post-processing of public inputs and released state. Thus, the
interface---not the view selector alone---is the unit of privacy enforcement.


This paper makes three contributions.
\begin{itemize}

\item We formulate long-term-memory privacy as an adaptive
transcript-privacy problem, requiring protection for the complete sequence
of memory-conditioned interactions rather than for each response in
isolation.

\item We introduce \textsc{DP-MemView}, which privately selects public
slot-level views, charges every affected attribute, and enforces
per-attribute caps with a fixed generic fallback. By restricting routing
and observable controls to public inputs and released history, while
keeping raw memory and internal scores out of downstream generation, we
prove pure \(B_a\)-DP for the full adaptive transcript.

\item We evaluate online and preallocated budget modes with three response
LLMs on a controlled adjacent-store benchmark and a public-corpus transfer
track. Against memory-independent controls, raw-memory references, and
adapted state-of-the-art defenses, both modes keep transcript
distinguishability near chance while preserving personalization and
response quality. Ablations and diagnostics expose distinct accounting, release, channel,
and cumulative-budget failures.

\end{itemize}

\section{Related Work}

\noindent \textbf{Long-term memory and memory leakage in LLM agents.}
Recent work treats memory as a core agent capability, evaluating accurate
retrieval, test-time learning, long-range understanding, and selective
forgetting \cite{hu2025memoryagentbench}. Persistent memory, however, also
creates a new privacy surface. MEXTRA and ADAM show that stored agent
memories can be elicited through black-box prompting and adaptively selected
natural-language queries \cite{wang2025mextra,lyu2026adam}. These attacks
expose direct extraction of stored content, but do not bound the cumulative
influence of protected attributes encoded in memory on repeated,
task-conditioned outputs. DP-MemView instead defines an accounting boundary
at the memory interface so that such releases compose into a single DP
guarantee for the full adaptive transcript.

\noindent \textbf{Privacy-preserving memory and retrieval.}
Prior work limits direct exposure of sensitive content before storage,
retrieval, or transmission across trust boundaries. MemPrivacy replaces detected privacy spans with type-aware placeholders for cloud-side processing and restores their values locally \cite{chen2026memprivacy}, while the Agent-Memory Protocol stores placeholder-redacted memories, supplies only task-relevant redacted context to the model, and restores the original values locally after inference \cite{wu2026amp}. In RAG,
prior work either detects and hides documents targeted by
membership-inference queries or provides document-level DP for generated
answers \cite{choi2025ragmia,koga2024dprag,tang2026dprag}. Contextual-integrity
work evaluates whether disclosures fit the task and recipient, while
AirGapAgent addresses a closely related app-facing threat by minimizing
the user data made available to an untrusted third party
\cite{mireshghallah2024confaide,bagdasarian2024airgap}. These approaches
protect values, documents, or task-scoped disclosures, but do not define
an attribute-level accounting boundary for repeated memory-conditioned
releases to an adaptive application. DP-MemView places this boundary at
the memory interface, composing repeated releases into an attribute-level
DP guarantee for the full adaptive transcript.

\noindent \textbf{Leakage metrics and differential privacy.}
Quantitative information flow, min-entropy leakage, and \(g\)-leakage
quantify how observations change an adversary's ability to infer a secret
\cite{smith2009qif,alvim2012gleakage}. Such metrics characterize risk
under a specified prior, guessing objective, or gain function. DP instead bounds changes in output distributions between neighboring
inputs, independently of any particular adversary's prior or inference rule
\cite{dwork2014dp}. \textsc{DP-MemView} uses standard DP primitives—the
exponential mechanism (EM) and adaptive composition
\cite{mcsherry2007mechanism,dwork2014dp,kairouz2015composition,rogers2016privacy}.
Its contribution is an enforceable memory-interface contract that
composes adaptive memory-conditioned releases into an attribute-level
DP guarantee for the full transcript, not a new privacy primitive.

\section{Problem Definition}

We model long-term memory as an indexed store
\(M=(m_1,\ldots,m_n)\), where each \(m_i\) is the content at a fixed
memory position assigned to a policy-defined slot such as health, food, finance, work, or schedule. A protected attribute
is a user property, such as a health condition or financial hardship,
whose inference the protection policy seeks to limit. Adjacent stores
share the same memory layout: the same positions, identifiers, and
non-sensitive slot metadata.

At turn \(t\), an adaptive semi-trusted application issues a query
\(q_t\). The trusted memory interface selects a memory-view output
\(v_t\) for the response-generating LLM, which returns \(y_t\);
when multiple slots are active, \(v_t\) denotes the tuple of selected
slot-level views. Let \(c_t\) denote any observable control-flow signal,
such as a fallback or public budget-status signal. We define
\begin{equation}
\tau_T=\bigl((q_t,c_t,y_t)\bigr)_{t=1}^{T},
\qquad
\tilde{\tau}_T=\bigl((q_t,v_t,c_t,y_t)\bigr)_{t=1}^{T}.
\label{eq:transcripts}
\end{equation}
The application observes the ordinary transcript \(\tau_T\), whereas
the augmented transcript \(\tilde{\tau}_T\) additionally reveals the
selected views to a stronger analytical adversary. The adversary may
adapt queries to the released history, hold arbitrary auxiliary
information, and observe all signals in \(c_t\).

\noindent \textbf{Protected attributes and adjacency.}
A protected attribute may be implied by several memories rather than by
one explicit record. For each attribute \(a\in\mathcal{A}\), the
protection policy specifies an index set
\(I_a\subseteq\{1,\ldots,n\}\) containing all memory positions whose
contents may individually or jointly imply \(a\). We denote the
corresponding attribute-level memory group by
\(M_{I_a}=(m_i)_{i\in I_a}\). These groups may overlap because one memory
may provide evidence about multiple attributes.
Figure~\ref{fig:arch} makes this notation concrete: in the
running example, \(a_D\) denotes type-2 diabetes with
\(I_{a_D}=\{1,2,3\}\), whereas \(a_B\) denotes debt with
\(I_{a_B}=\{3,4,5\}\); thus, position \(m_3\) belongs to both
protected-attribute groups.
The guarantee is conditional on complete grouping: if a position that
may imply \(a\) is omitted from \(I_a\), changes to that position fall
outside \(\sim_a\) and hence outside the guarantee.

We use symmetric attribute-level group-replacement adjacency. Two memory
stores \(M\) and \(M'\) are adjacent with respect to attribute \(a\),
written \(M\sim_a M'\), if they share the same memory layout and agree at
every position outside \(I_a\), while positions within \(I_a\) may contain
different but slot-compatible contents under the same structural metadata.
The attribute-presence case compares a store in which \(M_{I_a}\) provides
evidence for \(a\) with a store in which \(M'_{I_a}\) is \(a\)-neutral.

\noindent \textbf{Privacy and trust boundary.}
Only the trusted memory interface may access raw memory. It performs all
memory-dependent scoring and view selection internally, while the
response-generating LLM receives \(q_t\), \(v_t\), and \(\tau_{<t}\), but
not \(M\). We model the LLM as a fixed, store-independent conditional channel:
\begin{equation}
P(y_t\mid q_t,v_t,\tau_{<t},M)
=
K_{\mathrm{LLM}}(y_t\mid q_t,v_t,\tau_{<t}).
\label{eq:fixed-llm}
\end{equation}
\(K_{\mathrm{LLM}}\) denotes the fixed conditional response
distribution mapping \((q_t,v_t,\tau_{<t})\) to \(y_t\), shared across
adjacent memory stores. Thus, conditioned on \((q_t,v_t,\tau_{<t})\), \(y_t\) is independent
of \(M\) and is post-processing of these released inputs.

\noindent \textbf{Pure-DP transcript privacy.}
For each \(a\in\mathcal{A}\), let \(B_a\ge0\) denote the
precommitted total pure-DP budget for the interaction. The interface
satisfies pure \(B_a\)-DP with respect to \(\sim_a\) if, for every
adjacent pair \(M\sim_a M'\), every adaptive query strategy, and every
transcript event \(\mathcal{O}\), the following condition holds:
\begin{equation}
\Pr[\tilde{\tau}_T\in\mathcal{O}\mid M]
\le
e^{B_a}
\Pr[\tilde{\tau}_T\in\mathcal{O}\mid M'].
\label{eq:transcript-dp}
\end{equation}
By symmetry, the same bound holds with \(M\) and \(M'\) exchanged.
Moreover, since \(\tau_T\) is a projection of \(\tilde{\tau}_T\), it
inherits the same guarantee by post-processing.

\begin{figure*}[t]
\centering
\includegraphics[width=2.02 \columnwidth]{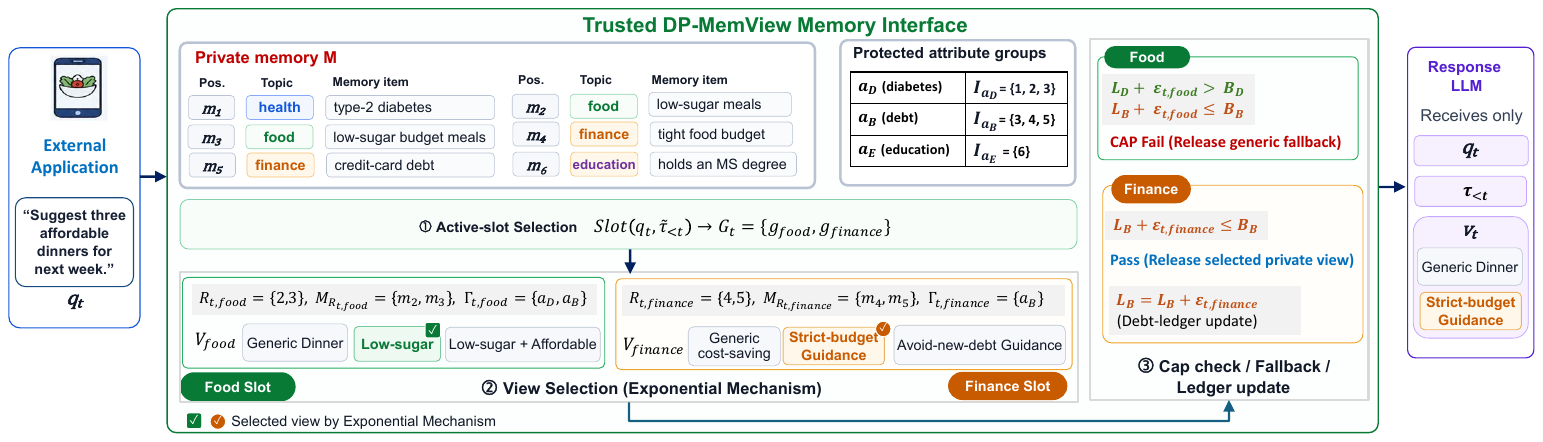}
\caption{Running example of \textsc{DP-MemView} for an affordable-dinner query. The interface activates the food and finance slots, maps each read set to the affected attributes, and applies the corresponding privacy-budget checks. The food slot releases the fixed generic fallback, while the finance slot releases the selected strict-budget view and charges the debt ledger. Only \(q_t\), the released view tuple \(v_t\), and \({\tau}_{<t}\) are passed to the response LLM; raw memory remains inside the trusted interface.}
\label{fig:arch}
\end{figure*}

\section{DP-MemView}

\textsc{DP-MemView} places the privacy boundary between private memory
and the response-generating LLM. Raw memory is used only inside the
trusted interface to select response-conditioning views; the LLM receives
only the selected views, and privacy cost is tracked by attribute-level
pure-DP ledgers. Figure~\ref{fig:arch} summarizes this interface flow.

\subsection{Slot-Level View Selection and Attribute Accounting}

Each memory position \(m_i\in M\) belongs to a policy-defined slot
\(g\in\mathcal{G}\) (e.g., health, food, finance, or work). For each slot,
\textsc{DP-MemView} fixes a public view vocabulary
\(\mathcal{V}_g\), shared across adjacent stores. The vocabulary includes
a fixed generic view \(v_g^{\mathrm{gen}}\), which provides
store-independent slot-level conditioning. A view
specifies how the LLM should answer rather than a raw user fact; for
instance, a health view may request blood-sugar-conscious advice without
asserting a diagnosis.

At turn \(t\), the router maps the public query and released history to
active slots,
\(G_t=\mathrm{Slot}(q_t,\tilde{\tau}_{<t})\).
For each active slot \(g\in G_t\), the read policy returns memory
positions \(\mathcal{R}_{t,g}\), with contents
\(M_{\mathcal{R}_{t,g}}=(m_i)_{i\in\mathcal{R}_{t,g}}\).
In our implementation, \(\mathcal{R}_{t,g}\) is determined by
\(q_t\), \(g\), the fixed memory layout, and
\(\tilde{\tau}_{<t}\), and does not depend on memory contents.
Content-dependent retrieval is not covered by the transcript-privacy
theorem; it would need to be adjacency-invariant or separately private,
with any retrieval cost charged to every affected attribute.

A read affects every attribute whose protected group intersects the read
set:
\begin{equation}
\Gamma^\star(\mathcal{R}_{t,g})
=
\{a\in\mathcal{A}:
\mathcal{R}_{t,g}\cap I_a\neq\emptyset\}.
\label{eq:affected-attributes}
\end{equation}
Intuitively, \(\Gamma^\star(R_{t,g})\) lists every protected attribute
whose memory group is touched by the read. These are the ledgers that
must be checked and, if the private selection is admitted, charged.
For example, in Figure~\ref{fig:arch}, \(R_{t,\mathrm{food}}=\{2,3\}\) intersects both \(I_{a_D}\) and \(I_{a_B}\), so \(\Gamma^\star(R_{t,\mathrm{food}})=\{a_D,a_B\}\).
The declared charge set \(\Gamma_{t,g}\subseteq\mathcal A\) must satisfy
\(\Gamma_{t,g}\supseteq\Gamma^\star(R_{t,g})\).
Over-charging is allowed, but no affected attribute may be omitted.

Then, the internal scorer assigns a score to each candidate view
using the public query and selected read-set memories:
\begin{equation}
u_{t,g}(q_t,v;M_{\mathcal{R}_{t,g}})\in[0,1].
\label{eq:readset-utility}
\end{equation}
Let \(\mathbf{u}_{t,g}(M) = \bigl(u_{t,g}(q_t,v;M_{\mathcal{R}_{t,g}})\bigr)_{v\in\mathcal{V}_g}\)
denote the internal score vector. For a charged selection, holding the
public inputs and released history fixed, we use a  bound
satisfying
\begin{equation}
\Delta u_{t,g}
\ge
\sup_{\substack{a\in\Gamma_{t,g},\; M\sim_a M'}}
\left\|
\mathbf{u}_{t,g}(M)-\mathbf{u}_{t,g}(M')
\right\|_\infty .
\label{eq:score-sensitivity}
\end{equation}
The score vector remains internal, since releasing it could reveal which
views the memory supports. If \(a\notin\Gamma_{t,g}\), then
\(\mathcal{R}_{t,g}\cap I_a=\emptyset\); hence, under
\(M\sim_a M'\), the read-set contents and score vectors are identical,
and the selection has zero privacy cost for \(a\)
(Lemma~\ref{lem:readset} in Appendix~\ref{app:proof-readset}).

Let \(\varepsilon_{t,g}\) be the local privacy parameter set by the
budget rule. For each charged selection, the interface samples a view
using the EM:
\begingroup
\small
\setlength{\abovedisplayskip}{4pt}
\setlength{\belowdisplayskip}{3pt}
\begin{equation}
\Pr[v_{t,g}=v\mid M]
=
\frac{
\exp\!\left(
\dfrac{\varepsilon_{t,g}\,
u_{t,g}(q_t,v;M_{\mathcal{R}_{t,g}})}
{2\Delta u_{t,g}}
\right)
}{
\displaystyle\sum_{v'\in\mathcal{V}_g}
\exp\!\left(
\dfrac{\varepsilon_{t,g}\,
u_{t,g}(q_t,v';M_{\mathcal{R}_{t,g}})}
{2\Delta u_{t,g}}
\right)
}.
\label{eq:exp-mech}
\end{equation}
\endgroup
\noindent
The EM favors useful views while bounding selection-probability
variation between adjacent stores.

For each protected attribute \(a\), let \(L_a\) denote the privacy cost
accumulated so far and \(B_a\) its precommitted cap. A charged selection
at \((t,g)\) is admitted only if
\(L_a+\varepsilon_{t,g}\le B_a\) for every
\(a\in\Gamma_{t,g}\). If any check fails, the interface returns the fixed
generic view at zero additional cost and leaves the ledgers unchanged.
Otherwise, it performs the private selection and updates
\(L_a\leftarrow L_a+\varepsilon_{t,g}\) for every
\(a\in\Gamma_{t,g}\). Thus, \(L_a\le B_a\) on every execution path.
Full pseudocode appears in Appendix~\ref{app:method-details}.

\subsection{End-to-End Transcript Privacy}
\label{sec:end-to-end-transcript}

The preceding subsection defines the slot-level release mechanism and
its pathwise budget rule. We now show that repeated adaptive
applications of that mechanism protect the complete transcript. The
key issue is state dependence: later slot operations may depend on
previously released views, but adjacent executions conditioned on the
same released prefix must use the same next operation and budget
branch. Raw memory may affect only the private view distribution.

Enumerate the active slots in
\(G_t=\mathrm{Slot}(q_t,\tilde{\tau}_{<t})\) according to the fixed
slot policy as \((g_{t,j})_{j=1}^{|G_t|}\). Before processing
\(g_{t,j}\), define $ h_{t,j} = \bigl(\tilde{\tau}_{<t},q_t,
(v_{t,g_{t,i}})_{i<j}\bigr)$. Thus, \(h_{t,j}\) contains the completed
history, the current query, and the views released by the preceding
slots, but neither raw memory nor internal scores.

\begin{definition}[Interface invariants for transcript privacy]
\label{def:memory-contract}
An interface with precommitted caps \(\{B_a\}\) satisfies the
invariants if every slot-level operation obeys the following.

\emph{Local release.}
\((\mathrm{C1})\) Every view vocabulary, including its fixed generic view, is shared
across adjacent stores. \((\mathrm{C2})\) Every read is charged to all affected attributes,
\(\Gamma_{t,g}\supseteq
\Gamma^\star(\mathcal{R}_{t,g})\). \((\mathrm{C3})\) Every released view is either adjacency-invariant or,
for an admitted charged selection, sampled by
Eq.~\eqref{eq:exp-mech}; raw memory and internal scores remain inside
the trusted interface.

\emph{Adaptive execution.}
\((\mathrm{C4})\) The active slots \(G_t\) and their processing order
are determined from \((q_t,\tilde{\tau}_{<t})\). Conditional on the same prefix \(h_{t,j}\), the read set, charge set,
local privacy parameter, and sensitivity bound for \(g_{t,j}\) are
identical across adjacent stores and use
neither raw memory nor unreleased scores. \((\mathrm{C5})\) The prospective cap check, generic fallback, and
ledger update are applied on every execution path, maintaining \(L_a\le B_a\) for every
\(a\in\mathcal A\).

\emph{Downstream release.}
\((\mathrm{C6})\) Conditioned on the public inputs, released history,
selected views, and the ledger state induced by that history, the
response LLM and all observable controls are independent of raw
memory.
\end{definition}

We refer to Conditions~\((\mathrm{C1})\)--\((\mathrm{C6})\)
collectively as the \textsc{DP-MemView} interface contract.
Conditions~\((\mathrm{C1})\)--\((\mathrm{C3})\) bound the privacy loss
of each slot-level release,
Conditions~\((\mathrm{C4})\)--\((\mathrm{C5})\) make these bounds
compose along the released execution path, and
Condition~\((\mathrm{C6})\) extends the guarantee to the
application-facing transcript.

\begin{theorem}[Adaptive transcript privacy of \textsc{DP-MemView}]
\label{thm:transcript}
If the interface satisfies Definition~\ref{def:memory-contract}, then,
for every protected attribute \(a\), every adaptive query strategy, and
every adjacent pair \(M\sim_a M'\), the augmented transcript
\(\tilde{\tau}_T\) satisfies pure \(B_a\)-DP with respect to
\(\sim_a\), as defined in Eq.~\eqref{eq:transcript-dp}. The ordinary
transcript \(\tau_T\) satisfies the same guarantee by post-processing.
\end{theorem}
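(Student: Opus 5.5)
The plan is to prove the bound pathwise on the augmented transcript. I will factor the joint density of \(\tilde{\tau}_T\) into per-step conditional factors, show that every factor is either identical under \(M\) and \(M'\) or has likelihood ratio at most \(e^{\varepsilon_{t,g}}\), and sum the charged exponents along the realized path. Fix \(a\), an adaptive query strategy (deterministic w.l.o.g., or randomized with store-independent coins that we condition on), and \(M\sim_a M'\). Since all vocabularies are finite by (C1) and shared, and the LLM and control outputs have fixed store-independent kernels, I will work with densities with respect to a common dominating measure. I will write, for a realized \(\tilde{\tau}_T\),
\[
p_M(\tilde{\tau}_T)=\prod_{t=1}^{T}\Bigl[\Pr(q_t\mid\tilde{\tau}_{<t})\prod_{j=1}^{|G_t|}\Pr_M\bigl(v_{t,g_{t,j}}\mid h_{t,j}\bigr)\,\Pr_M(c_t,y_t\mid h_{t,|G_t|+1})\Bigr].
\]
The query factors do not involve \(M\). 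By (C6) and Eq.~\eqref{eq:fixed-llm}, the \((c_t,y_t)\) factors depend only on released quantities and on the ledger state. The ledger state is itself a deterministic function of the released prefix, because by (C4) the charge sets and \(\varepsilon_{t,g}\) are functions of \(h_{t,j}\), and by (C5) so are the admit/fallback branches. So these factors coincide under \(M\) and \(M'\) and cancel in the ratio.

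For each view factor, the prefix \(h_{t,j}\) fixes the read set, \(\Gamma_{t,g}\), \(\varepsilon_{t,g}\), \(\Delta u_{t,g}\) and, via the ledger, the branch taken, identically for both stores (C4, C5). I split into three cases. (i) Fallback or an adjacency-invariant release: the factor is identical, so the ratio is 1 and no charge is made (C3). (ii) Admitted charged selection with \(a\notin\Gamma_{t,g}\): by (C2), \(\mathcal{R}_{t,g}\cap I_a=\emptyset\), so Lemma~\ref{lem:readset} gives identical score vectors, hence identical EM distributions and ratio 1. (iii) Admitted with \(a\in\Gamma_{t,g}\): Eq.~\eqref{eq:score-sensitivity} bounds the sup-norm change by \(\Delta u_{t,g}\), and the standard EM argument on Eq.~\eqref{eq:exp-mech} gives a ratio of at most \(e^{\varepsilon_{t,g}/2}\cdot e^{\varepsilon_{t,g}/2}\). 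In exactly this case \(L_a\) is incremented by \(\varepsilon_{t,g}\).

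Multiplying over the path, \(p_M(\tilde{\tau}_T)/p_{M'}(\tilde{\tau}_T)\le\exp\bigl(\sum_{(t,g)\text{ charged to }a}\varepsilon_{t,g}\bigr)=\exp(L_a^{\mathrm{final}}(\tilde{\tau}_T))\le e^{B_a}\), where the final inequality is the pathwise cap of (C5). Because the bound holds pointwise for every transcript, integrating over any event \(\mathcal{O}\) gives Eq.~\eqref{eq:transcript-dp}. Symmetry of \(\sim_a\) gives the reverse inequality. The ordinary transcript \(\tau_T\) is a measurable projection of \(\tilde{\tau}_T\), so the same bound follows by post-processing.

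The main obstacle is the bookkeeping step that justifies cancellation: showing rigorously that every store-dependent decision other than the EM draw is a function of the released prefix. This covers routing, read sets, \(\varepsilon\), the ledger value and hence the fallback branch, and the controls \(c_t\). The approach is an induction on \((t,j)\) in the fixed processing order, maintaining the invariant ``ledger state \(=\) deterministic function of \(h_{t,j}\)''. The key point is that the budget is spent adaptively, and it is the pathwise cap, not a worst-case sum over all possible paths, that delivers \(B_a\). This works only because the exponent is bounded on each realized path before integrating, rather than through a composition theorem with a fixed budget sequence.
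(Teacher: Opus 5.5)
Your proposal is correct and follows the paper's proof essentially step for step. Both arguments run an induction showing that the ledger state, and hence the read set, charge set, \(\varepsilon_{t,g}\), and cap-check branch, coincide under \(M\) and \(M'\) along a fixed augmented transcript. Both cancel the query, control, and LLM factors using the adaptive strategy, (C6), and Eq.~\eqref{eq:fixed-llm}. Both use the same three-case analysis of each view factor: identical or fallback, Lemma~\ref{lem:readset} for \(a\notin\Gamma_{t,g}\), and the \(e^{\varepsilon_{t,g}/2}\cdot e^{\varepsilon_{t,g}/2}\) EM bound. Both then bound the path by \(e^{L_a(\omega)}\le e^{B_a}\) before summing over \(\mathcal O\) and projecting to \(\tau_T\). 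The only difference is cosmetic: you work with densities against a dominating measure, while the paper sums point probabilities over discrete transcripts.
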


Equation~\eqref{eq:exp-mech} protects a single slot-level view
selection, whereas Theorem~\ref{thm:transcript} protects the complete
adaptive interaction. If every slot-level operation satisfies
Conditions~\((\mathrm{C1})\)--\((\mathrm{C6})\), then the queries,
released views, observable controls, and LLM responses accumulated over
multiple turns jointly satisfy \(B_a\)-DP for each protected attribute
\(a\). The guarantee holds even when the application chooses each query
adaptively from the observed history. Because it is established for the
augmented transcript, which additionally reveals the selected views, the
application-facing transcript inherits the same bound. The full proof is
in Appendix~\ref{app:proof-transcript}.

Because the prospective check maintains \(L_a\le B_a\) on every
execution path, every finite transcript prefix satisfies
\(B_a\)-DP. If a later selection would exceed the cap, the interface
returns \(v_g^{\mathrm{gen}}\) at zero additional privacy cost, so the
cumulative privacy loss of the transcript remains bounded by \(B_a\)
and the transcript-level \(B_a\)-DP guarantee continues to hold.

Theorem~\ref{thm:transcript} protects two stores that differ within one
protected group. We next consider two stores that may differ across
several groups. Let \(M=(m_1,\ldots,m_n)\) and \(M'=(m'_1,\ldots,m'_n)\) be two
same-scaffold stores, and let
\(D=\{i:m_i\neq m'_i\}\) denote the memory positions whose contents
differ between them. A set of protected attributes
\(\mathcal C\subseteq\mathcal A\) covers these differences if every
position in \(D\) belongs to at least one protected group \(I_a\) with
\(a\in\mathcal C\), that is,
\(D\subseteq\bigcup_{a\in\mathcal C}I_a\).
Chaining Theorem~\ref{thm:transcript} over the attributes in
\(\mathcal C\) gives the composite privacy cost
\(\sum_{a\in\mathcal C}B_a\).

\begin{corollary}[Composite-store transcript privacy]
\label{cor:attribute-cover}
For same-scaffold stores \(M\) and \(M'\), define the minimum cover cost
\begin{equation}
\kappa(M,M')
=
\min_{\substack{\mathcal C\subseteq\mathcal A, \;\; D\subseteq\bigcup_{a\in\mathcal C}I_a}}
\sum_{a\in\mathcal C}B_a,
\label{eq:attribute-cover}
\end{equation}
with \(\kappa(M,M')=\infty\) if no protected-group cover exists.
If \(\kappa(M,M')<\infty\) and the interface satisfies
Definition~\ref{def:memory-contract}, then, for every
adaptive query strategy and transcript event  \(\mathcal O\),
\begin{equation}
\Pr[\tilde{\tau}_T\in\mathcal O\mid M]
\le
e^{\kappa(M,M')}
\Pr[\tilde{\tau}_T\in\mathcal O\mid M'].
\label{eq:composite-transcript}
\end{equation}
The same bound holds for the ordinary transcript.
\end{corollary}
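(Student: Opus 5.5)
The plan is a group-privacy style hybrid argument: build a chain of intermediate stores that walks from \(M\) to \(M'\) one protected group at a time, so that consecutive stores are adjacent under a single \(\sim_a\). Then apply Theorem~\ref{thm:transcript} to each link and multiply the bounds.

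\textbf{Choosing the cover.} Since \(\kappa(M,M')<\infty\), the minimum in Eq.~\eqref{eq:attribute-cover} is attained by some cover \(\mathcal C=\{a_1,\ldots,a_k\}\) with \(D\subseteq\bigcup_{j}I_{a_j}\). The minimum exists because \(\mathcal A\) is finite, and it suffices to prove the bound for an arbitrary cover. From \(\mathcal C\), define disjoint pieces
\[
D_j = D\cap\Bigl(I_{a_j}\setminus\bigcup_{i<j}I_{a_i}\Bigr),
\]
which partition \(D\). Set \(M^{(0)}=M\), and let \(M^{(j)}\) agree with \(M'\) on \(D_1\cup\cdots\cup D_j\) and with \(M\) elsewhere. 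Then \(M^{(k)}=M'\), because every position of \(D\) lies in some \(D_j\) and positions outside \(D\) already agree.

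\textbf{Adjacency of consecutive stores.} Consecutive stores \(M^{(j-1)}\) and \(M^{(j)}\) differ only on \(D_j\subseteq I_{a_j}\). They share the same layout and agree outside \(I_{a_j}\). So \(M^{(j-1)}\sim_{a_j}M^{(j)}\) holds, provided each hybrid is a legitimate same-scaffold store.

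This proviso is the main obstacle. The definition of \(\sim_a\) requires the contents inside \(I_a\) to be slot-compatible under the same structural metadata. The corollary's hypothesis "same-scaffold" must therefore guarantee two things: that each \(m'_i\) is slot-compatible at position \(i\) in any context, and that mixing positions from \(M\) and \(M'\) yields a valid store in the domain on which the interface and the contract (C1)--(C6) are defined. I would state this explicitly as the meaning of same-scaffold: validity and slot-compatibility are positionwise properties. Under that reading every hybrid \(M^{(j)}\) is admissible and the adjacency claims follow immediately. A second subtlety is that Theorem~\ref{thm:transcript} must hold for every adjacent pair in the domain and for every adaptive strategy. Because the strategy is fixed independently of the store, the same strategy can be used along the whole chain.

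\textbf{Chaining and conclusion.} For any event \(\mathcal O\), Theorem~\ref{thm:transcript} gives
\[
\Pr[\tilde\tau_T\in\mathcal O\mid M^{(j-1)}]\le e^{B_{a_j}}\Pr[\tilde\tau_T\in\mathcal O\mid M^{(j)}].
\]
Multiplying these \(k\) inequalities telescopes to the factor \(e^{\sum_{a\in\mathcal C}B_a}=e^{\kappa(M,M')}\), which is Eq.~\eqref{eq:composite-transcript}. Exchanging the roles of \(M\) and \(M'\), or reversing the chain, gives the symmetric bound, since \(D\) is symmetric. Finally, \(\tau_T\) is a deterministic projection of \(\tilde\tau_T\). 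Applying the inequality to the preimage event \(\{\tilde\tau_T:\tau_T\in\mathcal O\}\) transfers the bound to the ordinary transcript.
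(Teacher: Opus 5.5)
Your proposal is correct and follows essentially the same hybrid argument as the paper: you pick a minimizing cover, walk from \(M\) to \(M'\) through intermediate stores that switch one protected group at a time, apply Theorem~\ref{thm:transcript} to each \(\sim_{a_r}\)-adjacent link, telescope the bounds, and transfer the result to the ordinary transcript by post-processing. Your disjoint pieces \(D_j\) give exactly the paper's hybrids built from the cumulative unions \(U_r=\bigcup_{s\le r}I_{a_s}\), because \(M\) and \(M'\) agree off \(D\); the paper also justifies that the hybrids are valid stores through positionwise slot-compatibility under the common scaffold, as you do.
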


Thus, Corollary~\ref{cor:attribute-cover} extends the per-attribute
guarantee of Theorem~\ref{thm:transcript} to stores differing across
multiple protected groups, using the minimum total budget of groups
covering all differing memory positions. This completes the
transcript-level privacy guarantee for composite store differences.
The proof is in Appendix~\ref{app:attribute-cover}.

\noindent\textbf{Inference interpretation.}
The transcript-level guarantee also bounds attribute inference: for any
prior odds between adjacent stores \(M\) and \(M'\), observing a
transcript event with positive probability under both hypotheses can
multiply or divide those odds by at most \(e^{B_a}\).
This bound is independent of the prior; see
Appendix~\ref{app:posterior-odds}.

\noindent\textbf{Why the full contract is needed.}
The transcript guarantees above do not follow from the private selector
alone. Conditions~\((\mathrm{C1})\)--\((\mathrm{C3})\) secure each
slot-level release, Conditions~\((\mathrm{C4})\)--\((\mathrm{C5})\)
make its privacy cost compose along the realized execution path, and
Condition~\((\mathrm{C6})\) prevents downstream channels from bypassing
the interface. Within this architecture and absent additional
safeguards, violating common support, complete charging, control of
memory-dependent channels, or pathwise cap enforcement can break
transcript-level \(B_a\)-DP and therefore the inference bound above.
Appendix~\ref{app:contract-failures} gives isolating counterexamples.

\section{Experiments}
\label{sec:experiments}

\subsection{Research Questions}

\begin{itemize}

\item \emph{(RQ1) Privacy--utility.}
Across repeated  interactions, does the DP-MemView reduce
observed transcript-level attribute distinguishability while preserving
supported personalization and limiting unsupported personalization,
relative to memory-independent controls, raw-memory references, and
adapted state-of-the-art privacy defenses?

\item \emph{(RQ2) Interface safeguards.}
Is private view selection alone sufficient, or do violations of the
surrounding interface contract produce distinct privacy, accounting, and
personalization failures?

\end{itemize}

\begin{table}[t]
\centering
\scriptsize
\setlength{\tabcolsep}{3pt}
\renewcommand{\arraystretch}{1.0}
\caption{Compared memory interfaces.}
\label{tab:compared-interfaces}
\begin{tabular}{@{}p{0.30\columnwidth}p{0.67\columnwidth}@{}}
\toprule
\textbf{Method} & \textbf{Response-LLM input; protection} \\
\midrule

\multicolumn{2}{@{}l}{\textbf{Memory-independent}}\\[-1mm]
GenericOnly
& Fixed generic view; memory-independent \\

\addlinespace[1pt]
\multicolumn{2}{@{}l}{\textbf{Memory text}}\\[-1mm]
RawReadSet
& Raw read-set text; non-private reference \\
TypedMask
& Item-masked read set; MemPrivacy adaptation \\
TaskMin
& Fixed task-family subset; AirGapAgent adaptation \\
OutputFilter
& Raw read-set text; post-generation deletion filter \\

\addlinespace[1pt]
\multicolumn{2}{@{}l}{\textbf{Proposed modes}}\\[-1mm]
\textbf{DP-MemView (on)}
& EM-sampled public view or generic fallback; Online allocation + ledger/cap \\

\textbf{DP-MemView (pre)}
& EM-sampled public view or generic fallback; Planned allocation + ledger/cap \\
\bottomrule
\end{tabular}
\end{table}

\begin{table*}[t]
\centering
\scriptsize
\setlength{\tabcolsep}{4.0pt}
\renewcommand{\arraystretch}{0.76}
\caption{RQ1 results at \(T=16\) and \(B_a=2\), with \(K=3\) for
DP-MemView (on).
\textbf{Privacy:} AUC closer to \(0.5\) and TPR@5 closer to \(0.05\)
indicate lower distinguishability.
\textbf{Utility:} Higher \(\mathrm{tRec}\) and \(U\) are better, whereas
lower \(\mathrm{Unsup}\) is better. \(\dagger\) denotes identical conditioning under \(M\) and \(M'\),
yielding chance-level distinguishability by construction.}
\label{tab:rq1-main}
\begin{tabular}{@{}l*{3}{rrrrr}@{}}
\toprule
\textbf{Method}
& \multicolumn{5}{c}{\textbf{Qwen2.5-7B}}
& \multicolumn{5}{c}{\textbf{Llama-3.1-8B}}
& \multicolumn{5}{c}{\textbf{Gemma-2-9B}}\\
\cmidrule(lr){2-6}\cmidrule(lr){7-11}\cmidrule(l){12-16}
& AUC & TPR@5\(\downarrow\) & tRec\(\uparrow\) & U\(\uparrow\) & Unsup\(\downarrow\)
& AUC & TPR@5\(\downarrow\) & tRec\(\uparrow\) & U\(\uparrow\) & Unsup\(\downarrow\)
& AUC & TPR@5\(\downarrow\) & tRec\(\uparrow\) & U\(\uparrow\) & Unsup\(\downarrow\)\\
\midrule
\multicolumn{16}{@{}l}{\textbf{PairedMem (synthetic)}}\\
\quad GenericOnly
& \(0.500^{\dagger}\) & \(0.050^{\dagger}\) & 0.426 & 0.877 & 0.044
& \(0.500^{\dagger}\) & \(0.050^{\dagger}\) & 0.429 & 0.854 & 0.065
& \(0.500^{\dagger}\) & \(0.050^{\dagger}\) & 0.378 & 0.882 & 0.025\\
\quad RawReadSet
& 0.842 & 0.547 & 0.521 & 0.775 & 0.136
& 0.942 & 0.688 & 0.525 & 0.665 & 0.271
& 0.925 & 0.609 & 0.490 & 0.700 & 0.237\\
\quad TypedMask
& \(0.500^{\dagger}\) & \(0.050^{\dagger}\) & 0.386 & 0.846 & 0.109
& \(0.500^{\dagger}\) & \(0.050^{\dagger}\) & 0.296 & 0.678 & 0.233
& \(0.500^{\dagger}\) & \(0.050^{\dagger}\) & 0.309 & 0.750 & 0.195\\
\quad TaskMin
& 0.832 & 0.406 & 0.524 & 0.772 & 0.129
& 0.963 & 0.719 & 0.532 & 0.650 & 0.229
& 0.919 & 0.641 & 0.468 & 0.729 & 0.202\\
\quad OutputFilter
& 0.793 & 0.438 & 0.517 & 0.786 & 0.136
& 0.885 & 0.562 & 0.509 & 0.666 & 0.266
& 0.866 & 0.469 & 0.475 & 0.712 & 0.238\\
\addlinespace[1pt]
\quad \textbf{DP-MemView (on)}
& 0.490 & 0.016 & 0.506 & 0.877 & 0.119
& 0.535 & 0.031 & 0.513 & 0.861 & 0.135
& 0.490 & 0.062 & 0.508 & 0.878 & 0.115\\
\quad \textbf{DP-MemView (pre)}
& 0.476 & 0.062 & 0.559 & 0.877 & 0.175
& 0.474 & 0.016 & 0.559 & 0.857 & 0.174
& 0.464 & 0.016 & 0.571 & 0.870 & 0.170\\
\midrule
\multicolumn{16}{@{}l}{\textbf{Public-corpus transfer}}\\
\quad GenericOnly
& \(0.500^{\dagger}\) & \(0.050^{\dagger}\) & 0.427 & 0.884 & 0.045
& \(0.500^{\dagger}\) & \(0.050^{\dagger}\) & 0.372 & 0.858 & 0.043
& \(0.500^{\dagger}\) & \(0.050^{\dagger}\) & 0.370 & 0.880 & 0.020\\
\quad RawReadSet
& 0.730 & 0.350 & 0.470 & 0.727 & 0.190
& 0.723 & 0.425 & 0.351 & 0.555 & 0.339
& 0.680 & 0.275 & 0.393 & 0.730 & 0.208\\
\quad TypedMask
& \(0.500^{\dagger}\) & \(0.050^{\dagger}\) & 0.369 & 0.815 & 0.139
& \(0.500^{\dagger}\) & \(0.050^{\dagger}\) & 0.289 & 0.603 & 0.270
& \(0.500^{\dagger}\) & \(0.050^{\dagger}\) & 0.304 & 0.782 & 0.131\\
\quad TaskMin
& 0.751 & 0.450 & 0.449 & 0.752 & 0.163
& 0.742 & 0.475 & 0.355 & 0.540 & 0.352
& 0.630 & 0.250 & 0.371 & 0.724 & 0.192\\
\quad OutputFilter
& 0.701 & 0.300 & 0.448 & 0.728 & 0.175
& 0.669 & 0.288 & 0.350 & 0.550 & 0.316
& 0.647 & 0.237 & 0.370 & 0.722 & 0.201\\
\addlinespace[1pt]
\quad \textbf{DP-MemView (on)}
& 0.562 & 0.113 & 0.513 & 0.883 & 0.105
& 0.563 & 0.100 & 0.503 & 0.862 & 0.114
& 0.535 & 0.100 & 0.507 & 0.878 & 0.089\\
\quad \textbf{DP-MemView (pre)}
& 0.523 & 0.062 & 0.566 & 0.878 & 0.172
& 0.490 & 0.113 & 0.565 & 0.856 & 0.179
& 0.486 & 0.088 & 0.573 & 0.873 & 0.160\\
\bottomrule
\end{tabular}
\end{table*}

\subsection{Experimental Setup}
We briefly summarize the experimental setup below; full details are
provided in Appendix~\ref{app:experiments}.

\paragraph{Evaluation data.}
No public long-term-memory corpus provides matched stores with the same
scaffold and non-target memories while differing only within \(I_a\).
We additionally hold the query trajectory fixed across each pair.
Thus, we use two complementary paired tracks.

\noindent\textbf{Synthetic controlled track (PairedMem).}
PairedMem provides exact counterfactual control through 320 pairs spanning
four protected attributes. Paired stores share their structure and
non-target content and differ only at 4--6 slot-compatible positions in
\(I_a\); each pair carries a 32-turn trajectory (10{,}240 gold-annotated
turns in total). Structure and gold annotations are fixed
programmatically, while language models generate only the surface wording.

\noindent\textbf{Public-corpus transfer track.}
To test whether the findings depend on the wording and templates of our
synthetic PairedMem construction, we construct 80 pairs whose memory
text is adapted from publicly released corpora---LoCoMo persona and
event histories~\cite{maharana2024locomo}, medical-question text from
the Medical Question Answering Datasets
collection~\cite{malikehmedicalqa}, and empathetic-dialogue
evidence~\cite{rashkin2019towards}---while the PairedMem skeleton, queries, policy, scorer, and
gold rules remain fixed.

\paragraph{Compared interfaces.}
All methods use the same query trajectory and response model and differ
only in the memory conditioning supplied at each turn and its protection
(Table~\ref{tab:compared-interfaces}). TypedMask and TaskMin are
setting-specific adaptations of MemPrivacy~\cite{chen2026memprivacy} and AirGapAgent~\cite{bagdasarian2024airgap}, respectively, while OutputFilter is a generic post-generation control. We evaluate DP-MemView
under two budget policies: the online mode admits fixed local charges
through a running ledger, whereas the preallocated mode distributes each
budget across known planned charges.

\paragraph{Query and budget protocol.}
Each pair is evaluated with 32 benign queries arranged as two
nonoverlapping 16-turn blocks. Each block contains 8 target-relevant,
4 same-domain attribute-neutral, and 4 distractor queries;
protected-attribute names and direct synonyms are excluded. The same query
sequence is used for both stores and all interfaces. Main results use \(T=16\).
We evaluate per-attribute caps \(B_a\in\{1,2,4,8\}\).
DP-MemView (on) uses budget granularities
\(K\in\{2,3,4\}\) and sets
\(\varepsilon_{t,g}=\min_{a\in\Gamma_{t,g}} B_a/K\).
DP-MemView (pre) instead distributes each budget over the planned
number \(C_a\) of charged selections for attribute \(a\), setting
\(\varepsilon_{t,g}
=\min_{a\in\Gamma_{t,g}} B_a/C_a\). Both modes use the same prospective cap check and generic fallback.

\paragraph{Models, metrics, and statistics.}
Responses are generated by Qwen2.5-7B-Instruct,
Llama-3.1-8B-Instruct, and Gemma-2-9B-it.
For each response model, privacy is measured by a frozen sentence encoder
and separate calibrated logistic-regression classifiers for each
attribute and prefix length, trained on PairedMem training transcripts,
calibrated on validation transcripts, and evaluated on held-out test
transcripts. The resulting auditors are then applied
without refitting to the corresponding public-corpus transfer transcripts. We report macro-AUC
and TPR at 5\% FPR.  

A blinded Claude Sonnet 4.6 judge evaluates response utility using a
frozen rubric. It scores each 16-turn transcript against a side-specific
checklist of required and unsupported response behaviors, without seeing
the method or store label. The judge assigns 0--4 scores to relevance, correctness, actionability,
and personalization. The normalized average of these four scores defines
\(U\). \(\mathrm{tRec}\) averages the normalized scores for
target-required behaviors on \(M\), whereas
\(\mathrm{Unsup}\) averages those for unsupported target-specific
behaviors on \(M'\). Thus, \(U\) reflects overall response quality,
\(\mathrm{tRec}\) reflects how well supported target-specific
information is used for personalization, and
\(\mathrm{Unsup}\) reflects the extent of unsupported
target-specific personalization.

\subsection{RQ1: Privacy--Utility}

Table~\ref{tab:rq1-main} compares DP-MemView with memory-independent
and raw-memory references and with setting-specific adaptations of
state-of-the-art privacy defenses. GenericOnly excludes private memory,
whereas RawReadSet directly supplies the read-set memory text to the
response model. TypedMask and TaskMin respectively evaluate masking and
task-based minimization, while OutputFilter is a generic post-generation
control.

\noindent\textbf{Privacy close to the memory-independent reference.}
GenericOnly yields chance-level distinguishability because its
conditioning is independent of the store. TypedMask also yields
chance-level distinguishability because it replaces the protected items
that differ between \(M\) and \(M'\) with identical placeholders,
removing the target-dependent signal. In contrast, RawReadSet, TaskMin,
and OutputFilter remain distinguishable across all models and both
tracks. Supplying read-set text, restricting memory to a task-specific
subset, or filtering the generated response does not prevent protected
attributes from influencing the transcript. Both DP-MemView modes remain
much closer to the memory-independent reference than these memory-text
methods while retaining memory-dependent conditioning. TPR@5 shows the
same separation at a 5\% false-positive rate.

\noindent\textbf{Useful personalization without raw-memory exposure.}
GenericOnly attains high overall response quality but low
\(\mathrm{tRec}\) because its fixed conditioning cannot provide
memory-grounded personalization. TypedMask also has low
\(\mathrm{tRec}\) because its placeholders remove the target-specific
information needed for personalization. Providing more memory text does
not produce higher overall response quality: across all model--track
combinations, both DP-MemView modes achieve higher \(U\) than RawReadSet,
TaskMin, and OutputFilter. This suggests that task-specific views guide
the response model more effectively than raw or filtered memory text.

DP-MemView (pre) achieves the highest \(\mathrm{tRec}\) in every
model--track combination while remaining close to chance in
distinguishability. DP-MemView (on) achieves comparable \(U\) but lower
\(\mathrm{Unsup}\) than the preallocated mode, RawReadSet, TaskMin, and
OutputFilter. The two modes therefore serve different settings:
preallocation sustains more target-required personalization when planned
charge counts are known, whereas online accounting limits unsupported
personalization without future charge information.

\noindent\textbf{Model and corpus robustness.}
The same privacy--utility separation holds across Qwen, Llama, and Gemma
and transfers to corpus-sourced memory text without auditor refitting.

\noindent\textbf{Budget sensitivity.}
Table~\ref{tab:budget-sensitivity} separates the roles of \(B_a\) and
\(K\). At fixed \(K=3\), AUC, \(\mathrm{tRec}\), and
\(\mathrm{Unsup}\) change little for \(B_a\le4\), whereas \(B_a=8\)
increases distinguishability. At fixed \(B_a=2\), increasing \(K\)
permits more charged releases at smaller \(\varepsilon_{t,g}\), increasing
both \(\mathrm{tRec}\) and \(\mathrm{Unsup}\). Thus, \(B_a\) scales
the local EM parameter, whereas \(K\) determines how many equal charges
each attribute budget supports.

\begin{table}[t]
\centering
\scriptsize
\setlength{\tabcolsep}{9.5pt}
\renewcommand{\arraystretch}{0.8}
\caption{Budget sensitivity of DP-MemView (on) on
Qwen2.5-7B and PairedMem at \(T=16\), with
\(\varepsilon_{t,g}=B_a/K\).}
\label{tab:budget-sensitivity}
\begin{tabular}{@{}llrrrr@{}}
\toprule
& \textbf{Setting}
& \(\boldsymbol{\varepsilon_{t,g}}\)
& \textbf{AUC}
& \textbf{tRec}\(\uparrow\)
& \textbf{Unsup}\(\downarrow\) \\
\midrule
\multirow{4}{*}{\shortstack[l]{\textbf{Cap}\\\textbf{magnitude}\\\((K=3)\)}}
& \(B_a=1\) & 0.333 & 0.541 & 0.508 & 0.115  \\
& \(B_a=2\) & 0.667 & 0.490 & 0.506 & 0.119  \\
& \(B_a=4\) & 1.333 & 0.508 & 0.513 & 0.113  \\
& \(B_a=8\) & 2.667 & 0.608 & 0.534 & 0.105  \\
\midrule
\multirow{3}{*}{\shortstack[l]{\textbf{Budget}\\\textbf{granularity}\\\((B_a=2)\)}}
& \(K=2\) & 1.000 & 0.510 & 0.487 & 0.088  \\
& \(K=3\) & 0.667 & 0.490 & 0.506 & 0.119  \\
& \(K=4\) & 0.500 & 0.525 & 0.527 & 0.129  \\
\bottomrule
\end{tabular}
\end{table}

\noindent\textbf{RQ1 summary.}
Across all three response models and both data tracks, both DP-MemView
modes remain much less distinguishable than RawReadSet, TaskMin, and
OutputFilter and achieve higher \(U\) than these methods. Relative to
GenericOnly and TypedMask, both modes recover more target-required
personalization. The preallocated mode provides the highest
\(\mathrm{tRec}\) when planned charge counts are known, whereas the
online mode yields lower \(\mathrm{Unsup}\) without future charge
information.

\subsection{RQ2: Interface Safeguards}

Table~\ref{tab:contract-ablations} isolates C2, C3, and C5 while
holding the remaining interface components fixed. We report the reconstructed target-attribute pure-DP bound and
response-level measurements at \(T=16\).

\emph{IncompleteCharge} violates C2 by charging only one of the
affected attributes when a read intersects multiple protected groups.
The table reconstructs the target attribute's cumulative privacy cost
by including the charges omitted from its ledger.
\emph{CappedArgmax} violates C3 by replacing EM sampling with
deterministic selection of the highest-scoring view. The ledger update and
generic fallback are left unchanged so that only the selection rule differs
from DP-MemView. If adjacent stores have different highest-scoring views,
one selected view can have probability 1 under one store and 0 under the
other; hence no finite pure-DP bound exists, and we report
\(\varepsilon=\infty\). \emph{NoCap} violates C5: it retains valid EM releases and complete
charging but removes the prospective cap check and generic fallback,
allowing cumulative charges to exceed \(B_a\).

\begin{table}[t]
\centering
\scriptsize
\setlength{\tabcolsep}{3.2pt}
\renewcommand{\arraystretch}{0.72}
\caption{End-to-end safeguard ablations on Qwen2.5-7B and PairedMem
at \(T=16\), with \(B_a=2\) and \(K=3\).}
\label{tab:contract-ablations}
\begin{tabular}{@{}lccrrrr@{}}
\toprule
\textbf{Variant}
& \textbf{Violated}
& \textbf{DP bound}
& \textbf{AUC}
& \(\mathbf{tRec}\uparrow\)
& \(\mathbf{U}\uparrow\)
& \(\mathbf{Unsup}\downarrow\) \\
\midrule
\textbf{DP-MemView (on)}
& --- & \(2.0\le B_a\)
& 0.490 & 0.506 & 0.877 & 0.119 \\

IncompleteCharge
& C2 & \(4.0>B_a\)
& 0.554 & 0.565 & 0.879 & 0.160 \\

CappedArgmax
& C3 & \(\infty\)
& 0.754 & 0.597 & 0.876 & 0.047 \\

NoCap
& C5 & \(5.33>B_a\)
& 0.599 & 0.583 & 0.876 & 0.161 \\
\bottomrule
\end{tabular}
\end{table}

\noindent\textbf{Personalization gains outside the \(B_a\)-DP guarantee.}
All three ablations increase tRec over DP-MemView, but none preserves its
\(B_a=2\) transcript guarantee. IncompleteCharge retains valid EM
releases but omits affected target charges: its ledger reports \(2.0\),
whereas the reconstructed charge is \(4.0\). CappedArgmax retains the
operational ledger and fallback only as controls, but deterministic
highest-score selection has no finite per-release DP bound. NoCap uses valid, fully charged EM releases, but their cumulative charge already reaches \(5.33\) at \(T=16\).

These violations also increase transcript distinguishability: AUC rises
from 0.490 for DP-MemView to 0.554, 0.754, and 0.599. The charge analysis
shows why the \(B_a\)-DP certificate fails, while AUC shows the resulting
empirical leakage. Because \(U\) changes by at most 0.002, the apparent
gain is mainly higher personalization obtained after invalidating or
exceeding the privacy guarantee. CappedArgmax's low Unsup is not evidence
of privacy; it accompanies the highest AUC.

\noindent\textbf{Direct support and side-channel failures.}
C1 is categorical: if a view is available under only one of two adjacent
stores, releasing that view has positive probability on one side and zero
on the other, so no finite pure-DP bound exists. For the C3/C6
channel-isolation case, exposing the top internal score yields a
score-channel AUC of 0.976, although the compliant view-selection and
response paths remain unchanged. Because our evaluation activates one
slot per turn, it does not isolate the slot-order subcase of C4; this case
is covered by the interface contract and its counterexample.

\noindent\textbf{RQ2 summary.}
Removing the safeguards causes failures at different layers: incomplete
charging makes the attribute ledger incorrect, deterministic selection
invalidates the local privacy bound, and removing the cap allows valid
local charges to accumulate beyond \(B_a\). Common support, safe
orchestration, and downstream isolation prevent one-sided outputs and
direct side channels. Thus, private view selection alone is insufficient;
the interface contract is required for transcript-level privacy.

\section{Conclusion}
We formulated long-term-memory privacy as an adaptive transcript problem
and introduced \textsc{DP-MemView}, which keeps raw memory inside a trusted
interface, passes only public views to the response LLM, and enforces
attribute-level pathwise caps. Under the interface contract, the full
adaptive transcript satisfies pure \(B_a\)-DP for each protected
attribute. Across three response LLMs and two evaluation tracks,
\textsc{DP-MemView} kept transcript distinguishability near chance while
retaining target-required personalization and response quality. Safeguard
ablations and diagnostics further showed that private selection alone is
insufficient and that the complete interface contract is necessary.
\FloatBarrier
\bibliography{aaai2027}
\clearpage

\appendix
\setcounter{secnumdepth}{2}

\clearpage
\onecolumn
\appendix

\section{Supplementary Method Details}
\label{app:method-details}

At the start of an interaction, \(L_a=0\) for every
\(a\in\mathcal A\). In our implementation, \(G_t\) contains only
memory-conditioned slots whose fixed read sets are nonempty and
intersect at least one protected group; hence, for every
\(g\in G_t\),
\(\mathcal R_{t,g}\neq\emptyset\) and
\(\Gamma^\star(\mathcal R_{t,g})\neq\emptyset\).
Algorithm~\ref{alg:dpmemview} gives the complete one-turn procedure
used in our implementation. Our implementation uses the minimal sound
charge set
\(\Gamma_{t,g}=\Gamma^\star(\mathcal R_{t,g})\), while
Definition~\ref{def:memory-contract} also permits conservative
supersets. All orchestration rules use only public inputs,
policy-fixed structure, and previously released outputs, as required
by Condition~(C4); cap checks and ledger updates follow
Condition~(C5). Below, \(\tau_{<t}\) denotes the ordinary-transcript
projection of \(\tilde{\tau}_{<t}\).

\begin{algorithm}[H]
\caption{\textsc{DP-MemView} at turn \(t\)}
\label{alg:dpmemview}
\small
\begin{algorithmic}[1]

\REQUIRE Query \(q_t\), memory \(M\), released history
\(\tilde{\tau}_{<t}\), ledgers and caps
\(\{(L_a,B_a)\}_{a\in\mathcal A}\), fixed policies, scorer,
sensitivity-bound rule, and view vocabularies
\(\{\mathcal V_g\}\)

\STATE
\(G_t\leftarrow\mathrm{Slot}(q_t,\tilde{\tau}_{<t})\)

\STATE enumerate \(G_t\) in the policy-fixed order as
\((g_{t,j})_{j=1}^{|G_t|}\)

\FOR{\(j=1,\ldots,|G_t|\)}

    \STATE \(g\leftarrow g_{t,j}\)

    \STATE
    \(h_{t,j}\leftarrow
    \bigl(\tilde{\tau}_{<t},q_t,
    (v_{t,g_{t,i}})_{i<j}\bigr)\)

    \STATE
    \(\mathcal R_{t,g}\leftarrow
    \mathrm{ReadPolicy}(q_t,g,h_{t,j})\)

        \STATE
        \(\Gamma_{t,g}\leftarrow
        \Gamma^\star(\mathcal R_{t,g})\)

        \STATE
        \(\varepsilon_{t,g}\leftarrow
        \mathrm{BudgetRule}
        \bigl(q_t,g,h_{t,j},\Gamma_{t,g},
        \{(L_a,B_a)\}_{a\in\Gamma_{t,g}}\bigr)\)

        \IF{\(\exists a\in\Gamma_{t,g}:
        L_a+\varepsilon_{t,g}>B_a\)}

            \STATE
            \(v_{t,g}\leftarrow v_g^{\mathrm{gen}}\)

        \ELSE

            \STATE compute the internal score vector
            \(\mathbf u_{t,g}(M)\)
            
            \STATE obtain the policy-fixed valid sensitivity bound
            \(\Delta u_{t,g}>0\)
            
            \STATE sample \(v_{t,g}\) according to
            Eq.~\eqref{eq:exp-mech}

            \STATE
            \(L_a\leftarrow L_a+\varepsilon_{t,g}\),
            \(\forall a\in\Gamma_{t,g}\)
       
    \ENDIF
\ENDFOR

\STATE
\(v_t\leftarrow
(v_{t,g_{t,j}})_{j=1}^{|G_t|}\)

\STATE
\(c_t\leftarrow
\mathrm{ObservableControl}
\bigl(q_t,v_t,\tilde{\tau}_{<t},
\{(L_a,B_a)\}_{a\in\mathcal A}\bigr)\)

\STATE
\(y_t\sim
K_{\mathrm{LLM}}(\cdot\mid q_t,v_t,\tau_{<t})\)

\STATE append \((q_t,v_t,c_t,y_t)\) to the augmented history

\RETURN \((c_t,y_t)\) to the application
\end{algorithmic}
\end{algorithm}

\section{Proofs}
\label{app:proofs}

\subsection{Zero Privacy Cost for an Uncharged Attribute}
\label{app:proof-readset}

\begin{lemma}[Zero cost for an uncharged attribute]
\label{lem:readset}
Assume Conditions~\((\mathrm{C1})\)--\((\mathrm{C4})\).
Assume that the scorer uses only the read-set memories, as specified
in Eq.~\eqref{eq:readset-utility}, and that
\(\Delta u_{t,g}\) is fixed by the interface policy.

Fix the same prefix \(h\) under adjacent stores \(M\sim_a M'\), and
consider an admitted charged selection at \((t,g)\). If
\(a\notin\Gamma_{t,g}\), then, for every
\(v\in\mathcal V_g\),
\[
\Pr[v_{t,g}=v\mid h,M]
=
\Pr[v_{t,g}=v\mid h,M'].
\]
Thus, the selection adds zero privacy loss for attribute \(a\).
\end{lemma}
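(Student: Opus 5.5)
The plan is to show that, once the prefix \(h\) is fixed, every ingredient of the exponential-mechanism distribution in Eq.~\eqref{eq:exp-mech} is literally the same function value under \(M\) and \(M'\). Equality of the two probability vectors then follows immediately.

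\emph{Step 1: the public ingredients coincide.} By (C4), conditional on the same prefix \(h=h_{t,j}\), the read set \(\mathcal R_{t,g}\), the charge set \(\Gamma_{t,g}\), the local parameter \(\varepsilon_{t,g}\), and the bound \(\Delta u_{t,g}\) are identical across adjacent stores and depend on neither raw memory nor unreleased scores. By (C1), the vocabulary \(\mathcal V_g\) is shared. The admission branch is also decided identically, because the ledgers \(L_a\) are determined by \(h\) and the policy rather than by memory, as noted under (C4)--(C5). So the event ``admitted charged selection'' occurs on both sides, and both sides sample from Eq.~\eqref{eq:exp-mech} over the same index set with the same constants.

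\emph{Step 2: the read-set contents coincide.} By (C2), \(\Gamma_{t,g}\supseteq\Gamma^\star(\mathcal R_{t,g})\). Since \(a\notin\Gamma_{t,g}\), we get \(a\notin\Gamma^\star(\mathcal R_{t,g})\), which by Eq.~\eqref{eq:affected-attributes} means \(\mathcal R_{t,g}\cap I_a=\emptyset\). Under \(M\sim_a M'\) the stores share the layout and agree at every position outside \(I_a\). Hence \(m_i=m'_i\) for all \(i\in\mathcal R_{t,g}\), and so \(M_{\mathcal R_{t,g}}=M'_{\mathcal R_{t,g}}\).

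\emph{Step 3: the scores coincide, so the selection probabilities coincide.} The scorer depends only on \((q_t,v,M_{\mathcal R_{t,g}})\) by Eq.~\eqref{eq:readset-utility}, and \(q_t\) is part of \(h\). Therefore \(\mathbf u_{t,g}(M)=\mathbf u_{t,g}(M')\) entrywise. Substituting into Eq.~\eqref{eq:exp-mech}, the numerator for each \(v\) and the normalizing sum are identical, which gives the claimed equality for every \(v\in\mathcal V_g\). The privacy-loss ratio is then \(1=e^{0}\), so this selection contributes zero loss for \(a\).

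The main obstacle is Step 1: making sure that nothing memory-dependent leaks into the sampling distribution other than through \(M_{\mathcal R_{t,g}}\). This covers the admission branch, the choice of read set, and any data-dependent sensitivity. I would handle it by appealing directly to (C4) and to the hypothesis that \(\Delta u_{t,g}\) is policy-fixed, and by noting that the ledger state is a deterministic function of the released history along the path.
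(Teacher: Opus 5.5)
Your proposal is correct and follows the paper's proof essentially step for step. (C4) and (C1) fix the read set, charge set, $\varepsilon_{t,g}$, $\Delta u_{t,g}$, and vocabulary; (C2) forces $\mathcal R_{t,g}\cap I_a=\emptyset$, so the read-set contents and score vectors agree and the EM probabilities coincide. Your extra argument that the admission branch agrees is not needed, because admission is a hypothesis of the lemma; the ledger-equality induction, which uses (C5), is carried out in the proof of Theorem~\ref{thm:transcript}.
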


\begin{proof}
Under the same prefix \(h\), Condition~(C4) gives the same read set
\(\mathcal R_{t,g}\), charge set \(\Gamma_{t,g}\), and local privacy
parameter \(\varepsilon_{t,g}\) under \(M\) and \(M'\).

By Condition~(C2),
\[
\Gamma_{t,g}\supseteq
\Gamma^\star(\mathcal R_{t,g}).
\]
Hence, \(a\notin\Gamma_{t,g}\) implies
\[
\mathcal R_{t,g}\cap I_a=\emptyset.
\]
Because \(M\sim_a M'\), the two stores agree at every position
outside \(I_a\). Since
\(\mathcal R_{t,g}\cap I_a=\emptyset\), every position in
\(\mathcal R_{t,g}\) lies outside \(I_a\). Therefore,
\(m_i=m'_i\) for every \(i\in\mathcal R_{t,g}\), and hence
\[
M_{\mathcal R_{t,g}}
=
M'_{\mathcal R_{t,g}}.
\]
Because the scorer uses only \(q_t\) and the read-set memories,
Eq.~\eqref{eq:readset-utility} therefore gives
\[
\mathbf u_{t,g}(M)
=
\mathbf u_{t,g}(M').
\]

Condition~(C1) gives the same view vocabulary
\(\mathcal V_g\), and the fixed sensitivity rule gives the same
\(\Delta u_{t,g}\). By Condition~(C3), the admitted charged
selection uses the EM in Eq.~\eqref{eq:exp-mech}. Since the view vocabulary, score vector, \(\varepsilon_{t,g}\), and
\(\Delta u_{t,g}\) are the same under both stores, the EM gives the
same probability to every \(v\in\mathcal V_g\). Thus,
\[
\Pr[v_{t,g}=v\mid h,M]
=
\Pr[v_{t,g}=v\mid h,M'].
\]
Therefore, the conditional likelihood ratio is one, and the selection
adds zero privacy loss for attribute \(a\).
\end{proof}

\subsection{Proof of Theorem~\ref{thm:transcript}}
\label{app:proof-transcript}

\begin{proof}
Fix a protected attribute \(a\), adjacent stores \(M\sim_a M'\),
an adaptive query strategy, and any fixed value of its auxiliary
information. We first prove the claim for the augmented transcript
\(\tilde{\tau}_T\).

Fix any possible augmented transcript value
\[
\omega
=
\bigl((q_t,v_t,c_t,y_t)\bigr)_{t=1}^{T}.
\]
We compare the probability of producing this same transcript under
\(M\) and \(M'\):
\[
\Pr[\tilde{\tau}_T=\omega\mid M]
\quad\text{and}\quad
\Pr[\tilde{\tau}_T=\omega\mid M'].
\]
For each slot-level operation \(g_{t,j}\) along this transcript, let
\(h_{t,j}\) denote the operation-level prefix immediately before
processing \(g_{t,j}\), as defined in the main text. Because both
probabilities concern the same transcript \(\omega\), they use the
same prefix \(h_{t,j}\) at the corresponding operation.

Initially, all ledgers are zero. Suppose that the ledger values are
the same immediately before a slot-level operation. Under the same
\(h_{t,j}\), Condition~(C4) gives the same read set, charge set, and
local privacy parameter under \(M\) and \(M'\). Because the ledger
values and caps are also the same, Condition~(C5) gives the same
cap-check result. If the selection is admitted, both executions add
the same \(\varepsilon_{t,g}\) to the same ledgers; otherwise, neither
execution changes the ledgers. Thus, the ledger values are also the
same immediately after the operation. By induction, they are the same
immediately before every slot-level operation along \(\omega\).

We now compare the view released at one slot-level operation
\(g=g_{t,j}\), conditional on the same prefix \(h_{t,j}\). There are
three cases.

\begin{itemize}

\item {Case 1: The released view has the same conditional
distribution under \(M\) and \(M'\).}

Condition~(C3) directly gives, for every \(v\in\mathcal V_g\),
\[
\Pr[v_{t,g}=v\mid h_{t,j},M]
=
\Pr[v_{t,g}=v\mid h_{t,j},M'].
\]
This case includes a failed cap check: by the preceding ledger
argument and Condition~(C5), both executions return the same fixed
generic view \(v_g^{\mathrm{gen}}\). Thus, this release adds zero
privacy loss for attribute \(a\).

\item {Case 2: The EM is admitted and
\(a\notin\Gamma_{t,g}\).}

Lemma~\ref{lem:readset}, which relies on Condition~(C2), gives, for every \(v\in\mathcal V_g\),
\[
\Pr[v_{t,g}=v\mid h_{t,j},M]
=
\Pr[v_{t,g}=v\mid h_{t,j},M'].
\]
Thus, this release also adds zero privacy loss for attribute \(a\).

\item {Case 3: The EM is admitted and
\(a\in\Gamma_{t,g}\).}

Write
\[
u_M(v)
=
u_{t,g}(q_t,v;M_{\mathcal R_{t,g}}),
\qquad
u_{M'}(v)
=
u_{t,g}(q_t,v;M'_{\mathcal R_{t,g}}).
\]
Condition~(C1) gives the same view vocabulary, Condition~(C4) gives
the same read set and \(\varepsilon_{t,g}\), and both executions use
the same policy-fixed positive bound \(\Delta u_{t,g}\). Because
\(a\in\Gamma_{t,g}\) and \(M\sim_a M'\),
Eq.~\eqref{eq:score-sensitivity} gives
\[
|u_M(v)-u_{M'}(v)|
\le
\Delta u_{t,g}
\qquad
\text{for every }v\in\mathcal V_g.
\]

Let
\[
Z_M
=
\sum_{v'\in\mathcal V_g}
\exp\!\left(
\frac{\varepsilon_{t,g}u_M(v')}
     {2\Delta u_{t,g}}
\right),
\qquad
Z_{M'}
=
\sum_{v'\in\mathcal V_g}
\exp\!\left(
\frac{\varepsilon_{t,g}u_{M'}(v')}
     {2\Delta u_{t,g}}
\right).
\]
The sensitivity bound gives
\[
u_M(v)-u_{M'}(v)\le\Delta u_{t,g}
\quad\Longrightarrow\quad
\exp\!\left(
\frac{\varepsilon_{t,g}
      (u_M(v)-u_{M'}(v))}
     {2\Delta u_{t,g}}
\right)
\le
e^{\varepsilon_{t,g}/2}.
\]
It also gives, for every \(v'\in\mathcal V_g\),
\[
u_{M'}(v')\le u_M(v')+\Delta u_{t,g}
\quad\Longrightarrow\quad
Z_{M'}
=
\sum_{v'\in\mathcal V_g}
\exp\!\left(
\frac{\varepsilon_{t,g}u_{M'}(v')}
     {2\Delta u_{t,g}}
\right)
\le
\sum_{v'\in\mathcal V_g}
\exp\!\left(
\frac{\varepsilon_{t,g}
      (u_M(v')+\Delta u_{t,g})}
     {2\Delta u_{t,g}}
\right)
=
e^{\varepsilon_{t,g}/2}Z_M.
\]

The EM assigns positive probability to every view in the common
vocabulary. Therefore, Eq.~\eqref{eq:exp-mech} gives
\[
\begin{aligned}
\frac{\Pr[v_{t,g}=v\mid h_{t,j},M]}
     {\Pr[v_{t,g}=v\mid h_{t,j},M']}
&=
\exp\!\left(
\frac{\varepsilon_{t,g}
      (u_M(v)-u_{M'}(v))}
     {2\Delta u_{t,g}}
\right)
\frac{Z_{M'}}{Z_M}
\\
&\le
e^{\varepsilon_{t,g}/2}
e^{\varepsilon_{t,g}/2}
=
e^{\varepsilon_{t,g}}.
\end{aligned}
\]
Thus, this release adds at most
\(\varepsilon_{t,g}\) privacy loss for attribute \(a\).

\end{itemize}

The remaining transcript components add no privacy loss. Because the
same transcript \(\omega\) fixes the same released history, the
adaptive query strategy gives
\[
\Pr[q_t\mid\tilde{\tau}_{<t},M]
=
\Pr[q_t\mid\tilde{\tau}_{<t},M'].
\]
The preceding induction also gives the same ledger values under the
two stores. Condition~(C6) therefore gives the same conditional
distribution for the observable control:
\[
\Pr[c_t\mid
q_t,v_t,\tilde{\tau}_{<t},\{L_b\}_{b\in\mathcal A},M]
=
\Pr[c_t\mid
q_t,v_t,\tilde{\tau}_{<t},\{L_b\}_{b\in\mathcal A},M'].
\]
Equation~\eqref{eq:fixed-llm} likewise gives
\[
\Pr[y_t\mid q_t,v_t,\tau_{<t},M]
=
\Pr[y_t\mid q_t,v_t,\tau_{<t},M'].
\]
Hence, the conditional likelihood ratios of \(q_t\), \(c_t\), and
\(y_t\) are all one. Therefore, in the likelihood ratio for the fixed transcript
\(\omega\), the factors corresponding to \(q_t\), \(c_t\), and
\(y_t\) cancel, and only the slot-level view probabilities remain.
Among them, only admitted EM selections with
\(a\in\Gamma_{t,g}\) can contribute a factor greater than one.

Applying the chain rule in chronological order and multiplying the
bounds above gives
\[
\begin{aligned}
\Pr[\tilde{\tau}_T=\omega\mid M]
&\le
\left(
\prod_{\substack{(t,g)\text{ with an admitted EM}\\
a\in\Gamma_{t,g}}}
e^{\varepsilon_{t,g}}
\right)
\Pr[\tilde{\tau}_T=\omega\mid M']
=
\exp\!\left(
\sum_{\substack{(t,g)\text{ with an admitted EM}\\
a\in\Gamma_{t,g}}}
\varepsilon_{t,g}
\right)
\Pr[\tilde{\tau}_T=\omega\mid M'].
\end{aligned}
\]

Let \(L_a(\omega)\) be the final value of the ledger for attribute
\(a\) along the execution path producing \(\omega\). Every term in
the preceding sum is added to \(L_a\). Since the ledger starts at zero
and never decreases,
\[
\sum_{\substack{(t,g)\text{ with an admitted EM}\\
a\in\Gamma_{t,g}}}
\varepsilon_{t,g}
\le
L_a(\omega).
\]
Condition~(C5) maintains \(L_a(\omega)\le B_a\). Therefore,
\[
\Pr[\tilde{\tau}_T=\omega\mid M]
\le
e^{L_a(\omega)}
\Pr[\tilde{\tau}_T=\omega\mid M']
\le
e^{B_a}
\Pr[\tilde{\tau}_T=\omega\mid M'].
\]

Now let \(\mathcal O\) be any set of possible augmented transcripts.
Summing the preceding inequality over all \(\omega\in\mathcal O\)
gives
\[
\begin{aligned}
\Pr[\tilde{\tau}_T\in\mathcal O\mid M]
&=
\sum_{\omega\in\mathcal O}
\Pr[\tilde{\tau}_T=\omega\mid M]
\le
e^{B_a}
\sum_{\omega\in\mathcal O}
\Pr[\tilde{\tau}_T=\omega\mid M']
=
e^{B_a}
\Pr[\tilde{\tau}_T\in\mathcal O\mid M'].
\end{aligned}
\]
Thus, the augmented transcript \(\tilde{\tau}_T\) satisfies
Eq.~\eqref{eq:transcript-dp} for attribute \(a\).

Finally, the ordinary transcript \(\tau_T\) is obtained from
\(\tilde{\tau}_T\) by removing the selected views. Therefore,
\(\tau_T\) satisfies the same \(B_a\)-DP bound by post-processing.
\end{proof}

\subsection{Proof of Corollary~\ref{cor:attribute-cover}}
\label{app:attribute-cover}

\begin{proof}
Fix an adaptive query strategy and a transcript event
\(\mathcal O\). If \(D=\emptyset\), then \(M=M'\). Moreover, the
empty set is a valid cover of \(D\) with cost zero, so
\(\kappa(M,M')=0\), and the claimed inequality holds with equality.

Assume \(D\neq\emptyset\). Since \(\kappa(M,M')<\infty\), at least
one protected-group cover of \(D\) exists. Let \(\mathcal C\) be a
cover attaining the minimum in Eq.~\eqref{eq:attribute-cover}, and
list its elements as
\[
\mathcal C=\{a_1,\ldots,a_k\}.
\]
Then
\[
D\subseteq\bigcup_{r=1}^{k}I_{a_r},
\qquad
\sum_{r=1}^{k}B_{a_r}
=
\kappa(M,M').
\]

Set \(U_0=\emptyset\), and for \(r=1,\ldots,k\), define
\[
U_r=\bigcup_{s=1}^{r}I_{a_s}.
\]
For \(r=0,\ldots,k\), let
\(M^{(r)}=(m_i^{(r)})_{i=1}^{n}\), where
\[
m_i^{(r)}
=
\begin{cases}
m'_i, & i\in U_r,\\
m_i,  & i\notin U_r.
\end{cases}
\]
Because \(U_0=\emptyset\), we have \(M^{(0)}=M\).
At every position \(i\), \(M^{(r)}\) uses either \(m_i\) or \(m'_i\).
Since \(M\) and \(M'\) have the same scaffold and both values are
slot-compatible at position \(i\), each \(M^{(r)}\) is a valid
same-scaffold store.

For \(r=1,\ldots,k\), we have
\[
U_r=U_{r-1}\cup I_{a_r}.
\]
Positions in \(U_{r-1}\) have value \(m'_i\) in both
\(M^{(r-1)}\) and \(M^{(r)}\), while positions outside \(U_r\) have
value \(m_i\) in both stores. Hence, the two stores can differ only at
positions in
\[
U_r\setminus U_{r-1}
=
I_{a_r}\setminus U_{r-1}
\subseteq I_{a_r}.
\]
Together with their common scaffold and slot-compatible contents, this
gives
\[
M^{(r-1)}\sim_{a_r}M^{(r)}.
\]

Because \(\mathcal C\) covers \(D\), we have \(D\subseteq U_k\).
If \(i\in U_k\), then \(m_i^{(k)}=m'_i\) by construction. If
\(i\notin U_k\), then \(i\notin D\), so \(m_i=m'_i\), and therefore
\[
m_i^{(k)}=m_i=m'_i.
\]
Thus,
\[
M^{(k)}=M'.
\]

Applying Theorem~\ref{thm:transcript} to each adjacent pair gives
\[
\Pr[\tilde{\tau}_T\in\mathcal O\mid M^{(r-1)}]
\le
e^{B_{a_r}}
\Pr[\tilde{\tau}_T\in\mathcal O\mid M^{(r)}]
\]
for \(r=1,\ldots,k\). Chaining these inequalities gives
\[
\begin{aligned}
\Pr[\tilde{\tau}_T\in\mathcal O\mid M^{(0)}]
&\le
\left(\prod_{r=1}^{k}e^{B_{a_r}}\right)
\Pr[\tilde{\tau}_T\in\mathcal O\mid M^{(k)}]
=
\exp\!\left(\sum_{r=1}^{k}B_{a_r}\right)
\Pr[\tilde{\tau}_T\in\mathcal O\mid M^{(k)}].
\end{aligned}
\]
Using \(M^{(0)}=M\), \(M^{(k)}=M'\), and
\(\sum_{r=1}^{k}B_{a_r}=\kappa(M,M')\), we obtain
\[
\Pr[\tilde{\tau}_T\in\mathcal O\mid M]
\le
e^{\kappa(M,M')}
\Pr[\tilde{\tau}_T\in\mathcal O\mid M'].
\]
This proves Eq.~\eqref{eq:composite-transcript}. The same bound holds
for the ordinary transcript because it is obtained from
\(\tilde{\tau}_T\) by removing the selected views.
\end{proof}

\subsection{Posterior-Odds Interpretation}
\label{app:posterior-odds}

Let \(H\in\{0,1\}\) indicate which adjacent store is used, with
\(H=1\) corresponding to \(M\) and \(H=0\) corresponding to \(M'\).
Assume \(0<\Pr[H=1]<1\). For any augmented-transcript event
\(\mathcal O\) having positive probability under both stores, Bayes'
rule gives
\[
\Pr[H=1\mid\tilde{\tau}_T\in\mathcal O]
=
\frac{
\Pr[\tilde{\tau}_T\in\mathcal O\mid M]\Pr[H=1]
}{
\Pr[\tilde{\tau}_T\in\mathcal O]
},
\qquad
\Pr[H=0\mid\tilde{\tau}_T\in\mathcal O]
=
\frac{
\Pr[\tilde{\tau}_T\in\mathcal O\mid M']\Pr[H=0]
}{
\Pr[\tilde{\tau}_T\in\mathcal O]
}.
\]
Taking the ratio of these two expressions gives
\[
\begin{aligned}
\frac{
\Pr[H=1\mid\tilde{\tau}_T\in\mathcal O]
}{
\Pr[H=0\mid\tilde{\tau}_T\in\mathcal O]
}
&=
\frac{
\Pr[\tilde{\tau}_T\in\mathcal O\mid M]\Pr[H=1]
}{
\Pr[\tilde{\tau}_T\in\mathcal O\mid M']\Pr[H=0]
}.
\end{aligned}
\]
Dividing both sides by the prior odds
\(\Pr[H=1]/\Pr[H=0]\) yields
\[
\frac{
  \Pr[H=1\mid\tilde{\tau}_T\in\mathcal O]/
  \Pr[H=0\mid\tilde{\tau}_T\in\mathcal O]
}{
  \Pr[H=1]/\Pr[H=0]
}
=
\frac{
  \Pr[\tilde{\tau}_T\in\mathcal O\mid M]
}{
  \Pr[\tilde{\tau}_T\in\mathcal O\mid M']
}.
\]

Theorem~\ref{thm:transcript} gives the upper bound
\(e^{B_a}\). Since \(\sim_a\) is symmetric, applying the theorem with
\(M\) and \(M'\) exchanged gives the lower bound \(e^{-B_a}\).
Therefore,
\[
e^{-B_a}
\le
\frac{
  \Pr[H=1\mid\tilde{\tau}_T\in\mathcal O]/
  \Pr[H=0\mid\tilde{\tau}_T\in\mathcal O]
}{
  \Pr[H=1]/\Pr[H=0]
}
\le
e^{B_a}.
\]
Thus, observing any transcript event can change the prior odds by at
most a factor \(e^{B_a}\) in either direction. The same statement
holds for the ordinary transcript.

\subsection{Failure Modes under Omitted Interface Invariants}
\label{app:contract-failures}

\begin{proposition}[Failure modes under omitted interface invariants]
\label{prop:contract-failures}
Within the \textsc{DP-MemView} architecture and without additional
safeguards, each of the following four categories is necessary:
common view support, complete attribute charging, exclusion or
privatization of raw-memory-dependent observable releases, and
pathwise budget enforcement. For each category, there exists an
interface that satisfies the other three categories but violates
\(B_a\)-DP.
\end{proposition}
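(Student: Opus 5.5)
The plan is to prove Proposition~\ref{prop:contract-failures} by four explicit counterexamples, each built on one minimal instance. Take a single protected attribute \(a\) with \(I_a=\{1\}\), a second attribute \(b\) with \(I_b=\{1,2\}\) when overlap is needed, and adjacent stores \(M\sim_a M'\) that differ only at position~1. Use a single slot \(g\) with read set \(\mathcal R_{t,g}=\{1\}\), a binary vocabulary \(\mathcal V_g=\{v^{\mathrm{gen}}_g,v^\star\}\), scores \(u(v^\star;M)=1\), \(u(v^\star;M')=0\), \(u(v^{\mathrm{gen}}_g;\cdot)=0\), sensitivity \(\Delta u=1\), and a cap \(B_a\) with a fixed query sequence. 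In each case I will keep the other three categories intact: Theorem~\ref{thm:transcript}'s proof is then valid for every step except the violated one. I will exhibit a single transcript \(\omega\) whose likelihood ratio \(\Pr[\tilde\tau_T=\omega\mid M]/\Pr[\tilde\tau_T=\omega\mid M']\) exceeds \(e^{B_a}\) (or is infinite). A singleton event \(\mathcal O=\{\omega\}\) therefore breaks Eq.~\eqref{eq:transcript-dp}.

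\emph{Common support.} Let the vocabulary under \(M\) contain an extra view \(v^\dagger\) absent under \(M'\), selected by EM with full charging, cap enforcement and no side channels. Then \(\Pr[v_{t,g}=v^\dagger\mid M]>0=\Pr[v_{t,g}=v^\dagger\mid M']\), so the ratio is infinite.

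\emph{Complete charging.} Let the read touch \(I_a\), but let the declared charge set be \(\{b\}\) only. Each admitted EM step then has true loss \(\varepsilon\) for \(a\) while \(L_a\) stays at~0. So the cap never triggers for \(a\), and after \(k\) releases with \(k\varepsilon>B_a\) the transcript \(\omega\) consisting of \(v^\star\) every turn has ratio \(\bigl(\Pr[v^\star\mid M]/\Pr[v^\star\mid M']\bigr)^k\). I will compute this per-step ratio explicitly for the binary EM as \(e^{\varepsilon/2}(1+e^{\varepsilon/2})/(1+1)\cdot\ldots\), and check that it is strictly greater than~1. That suffices, since \(k\) is unbounded as long as the budget check on \(b\) admits; I will give \(b\) a large cap, or make \(B_b\) effectively unconstrained.

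\emph{Raw-memory-dependent releases.} Keep EM selection and ledgers correct, but let the observable control \(c_t\) equal the top internal score, or let the deterministic argmax be released, violating C3/C6. Then \(c_t=1\) has probability \(1\) under \(M\) and \(0\) under \(M'\).

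\emph{Pathwise enforcement.} Remove the prospective check and fallback while retaining EM and correct charging. Repeating the \(k\)-fold product above with \(k\varepsilon>B_a\) then gives ratio \(>e^{B_a}\) once \(k\) is chosen large enough. Since the per-step EM ratio is below \(e^\varepsilon\), this takes some care.

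The main obstacle is the quantitative step in the charging and cap cases. The EM's realized ratio for \(v^\star\) is not \(e^\varepsilon\) but \(r(\varepsilon)=\dfrac{e^{\varepsilon/2}}{1+e^{\varepsilon/2}}\cdot 2\), which is bounded by~2. Exceeding \(e^{B_a}\) therefore needs \(k>B_a/\log r(\varepsilon)\) steps, and I must choose \(k\) that way rather than \(k\varepsilon>B_a\). Because \(r(\varepsilon)>1\) for every \(\varepsilon>0\), such a finite \(k\) always exists. The construction is finite, so the argument goes through.
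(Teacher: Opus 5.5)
Your proposal is correct and follows essentially the same route as the paper: four isolating counterexamples (one-sided view support, an omitted charge for an overlapping attribute whose own cap \(B_b\) is made large, a raw-memory-dependent deterministic release or control signal, and uncapped repetition of correctly charged EM releases). The main difference is your score choice, which gives the per-step ratio \(r(\varepsilon)=2e^{\varepsilon/2}/(1+e^{\varepsilon/2})\) and so needs \(k>B_a/\log r(\varepsilon)\) repetitions, whereas the paper reverses both views' scores across \(M\) and \(M'\) to get exactly \(e^{\varepsilon_0/2}\) per step and \(N>2B_a/\varepsilon_0\); note also that with your scores both views tie at \(0\) under \(M'\), so your deterministic-argmax variant needs tie-breaking toward \(v^{\mathrm{gen}}_g\) (or the reversed scores), while your top-internal-score control signal works as stated.
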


\begin{proof}
We give one counterexample for each category. All policies and
interface components not mentioned in a construction remain fixed and
store-independent.

\paragraph{Common view support (C1).}
Consider a fixed admitted operation for which the read set, charge
set, local privacy parameter, cap decision, and all downstream
channels satisfy the remaining contract categories. Let only the view
vocabulary depend on the store. Suppose that
\[
v_\star\in\mathcal V_g(M),
\qquad
v_\star\notin\mathcal V_g(M')
\]
for adjacent stores \(M\sim_a M'\), and run the EM over the respective
vocabularies. Since the EM assigns positive probability to every
candidate view, the transcript event
\[
\mathcal O
=
\{\tilde{\tau}_T:v_{t,g}=v_\star\}
\]
satisfies
\[
\Pr[\tilde{\tau}_T\in\mathcal O\mid M]
=
\gamma>0,
\qquad
\Pr[\tilde{\tau}_T\in\mathcal O\mid M']
=
0.
\]
Any finite \(B_a\) would require
\[
\gamma
\le
e^{B_a}
\Pr[\tilde{\tau}_T\in\mathcal O\mid M']
=
0,
\]
which is impossible. Thus, store-dependent view support precludes any
finite pure-DP guarantee.

\paragraph{Complete attribute charging (C2).}

Keep the same view vocabulary under both stores, use the valid EM in
Eq.~\eqref{eq:exp-mech}, keep the routing and downstream outputs
store-independent, and retain the pathwise cap check. We change only
the charge set.

Choose two overlapping protected groups \(I_a\) and \(I_b\), where
\(b\neq a\), and a position
\(i^\star\in\mathcal R_{t,g}\cap I_a\cap I_b\). Choose adjacent
stores \(M\sim_a M'\) that differ only at \(i^\star\). Since the read
set touches both groups,
\[
\{a,b\}\subseteq\Gamma^\star(\mathcal R_{t,g}).
\]
Nevertheless, declare
\[
\Gamma_{t,g}=\{b\},
\]
so the operation checks and charges only \(b\), while omitting the
affected attribute \(a\).

Fix a local parameter \(\varepsilon_0>0\). Choose an integer
\(N>2B_a/\varepsilon_0\), and choose
\(B_b\ge N\varepsilon_0\). For each \(n=1,\ldots,N\), suppose that
the first \(n-1\) repetitions have been admitted. Then
\(L_b=(n-1)\varepsilon_0\), and the \(n\)-th cap check gives
\[
L_b+\varepsilon_0
=
n\varepsilon_0
\le
N\varepsilon_0
\le
B_b.
\]
Thus, by induction, all \(N\) repetitions are admitted. After the final
repetition,
\[
L_b=N\varepsilon_0\le B_b,
\qquad
L_a=0.
\]

Let \(\mathcal V_g=\{v_0,v_1\}\). Write a generic store as
\(X=(x_i)_{i=1}^{n}\), and choose a binary function \(\phi\) such that
\(\phi(m_{i^\star})=1\) and \(\phi(m'_{i^\star})=0\). Define
\[
u(v_1;X)=\phi(x_{i^\star}),
\qquad
u(v_0;X)=1-\phi(x_{i^\star}).
\]
The resulting scores are
\[
u(v_1;M)=1,\quad u(v_0;M)=0,
\qquad
u(v_1;M')=0,\quad u(v_0;M')=1.
\]
Since all scores lie in \([0,1]\), the bound
\(\Delta u_{t,g}=1\) is valid.

Suppressing the common operation prefix,
Eq.~\eqref{eq:exp-mech} gives
\[
\Pr[v_{t,g}=v_1\mid M]
=
\frac{e^{\varepsilon_0/2}}
     {1+e^{\varepsilon_0/2}},
\qquad
\Pr[v_{t,g}=v_1\mid M']
=
\frac{1}
     {1+e^{\varepsilon_0/2}}.
\]
Hence,
\[
\frac{\Pr[v_{t,g}=v_1\mid M]}
     {\Pr[v_{t,g}=v_1\mid M']}
=
e^{\varepsilon_0/2}.
\]

Let \(\mathcal O_N\) be the augmented-transcript event that all
\(N\) repetitions release \(v_1\). Keep the read set, scores, and
local privacy parameter fixed across repetitions, and use fresh EM
randomness at each repetition. All other transcript components are
the same under \(M\) and \(M'\). Therefore,
\[
\begin{aligned}
\frac{
\Pr[\tilde{\tau}_T\in\mathcal O_N\mid M]
}{
\Pr[\tilde{\tau}_T\in\mathcal O_N\mid M']
}
&=
\left(
\frac{\Pr[v_{t,g}=v_1\mid M]}
     {\Pr[v_{t,g}=v_1\mid M']}
\right)^N
\\
&=
e^{N\varepsilon_0/2}.
\end{aligned}
\]
Because \(N>2B_a/\varepsilon_0\),
\[
e^{N\varepsilon_0/2}>e^{B_a}.
\]
Thus, the transcript violates \(B_a\)-DP even though \(L_a\) remains
zero throughout the execution. Omitting the affected attribute \(a\)
from \(\Gamma_{t,g}\) therefore invalidates the claimed
\(B_a\)-DP guarantee.

\paragraph{Raw-memory-dependent observable releases
(C3--C4/C6).}
First, keep the common vocabulary
\(\mathcal V_g=\{v_0,v_1\}\), but replace the EM with deterministic
argmax selection. Use the reversed scores from the preceding
construction. Then
\[
\Pr[v_{t,g}=v_1\mid M]=1,
\qquad
\Pr[v_{t,g}=v_1\mid M']=0.
\]
Hence, no finite \(B_a\) can satisfy DP. This is the failure used by
the CappedArgmax ablation.

A raw-memory-dependent control signal causes the same problem. Choose
an adjacent pair in which \(M_{I_a}\) contains evidence for \(a\) and
\(M'_{I_a}\) is \(a\)-neutral, and release
\[
c
=
\mathbf 1\{M_{I_a}\text{ contains evidence for }a\}.
\]
Then
\[
\Pr[c=1\mid M]=1,
\qquad
\Pr[c=1\mid M']=0.
\]
The same one-sided event can be created by a memory-dependent slot
activation, fallback decision, ledger-status signal, or released
score. Therefore, every raw-memory-dependent observable output must
either have the same distribution under adjacent stores or be
released by a valid private mechanism.

\paragraph{Pathwise cumulative cap (C5).}
Keep the common vocabulary, complete charging, and the EM, but remove
the cumulative cap. For every operation, set
\[
\Gamma_{t,g}=\{a\}.
\]
Use the same two-view scores as in the C2 construction, with local
parameter \(\varepsilon_0\), and repeat the operation \(N\) times.
Each release is correctly charged to \(a\), but no cap prevents the
repetitions.

For the transcript event \(\mathcal O_N\) that every operation
releases \(v_1\),
\[
\frac{
\Pr[\tilde{\tau}_T\in\mathcal O_N\mid M]
}{
\Pr[\tilde{\tau}_T\in\mathcal O_N\mid M']
}
=
e^{N\varepsilon_0/2}.
\]
Choosing
\[
N>\frac{2B_a}{\varepsilon_0}
\]
makes this ratio larger than \(e^{B_a}\). Thus, valid and correctly
charged local EM releases do not provide a fixed \(B_a\)-DP guarantee
unless their cumulative cost is capped on every execution path.
\end{proof}

\section{Supplementary Experimental Details}
\label{app:experiments}

\subsection{Evaluation Data Construction}
\label{app:data-construction}

We use two paired evaluation tracks. PairedMem provides exact
\(M/M'\) control, while the corpus-sourced track tests whether the
results persist with memory text adapted from public corpora.
Table~\ref{tab:data-tracks} summarizes the two tracks. Across both tracks, we keep the queries, router, read and charge
policies, scorer, view vocabularies, and gold rules fixed.

\begin{table*}[h]
\centering
\small
\caption{Summary and scale of the two paired evaluation tracks.}
\label{tab:data-tracks}
\setlength{\tabcolsep}{4.8pt}
\renewcommand{\arraystretch}{1.4}
\begin{tabular}{@{}
p{0.13\textwidth}
p{0.22\textwidth}
p{0.37\textwidth}
p{0.23\textwidth}
@{}}
\toprule
\textbf{Track}
&
\textbf{Memory text}
&
\textbf{\(M/M'\) construction}
&
\textbf{Scale}
\\
\midrule

\textbf{PairedMem}
&
Synthetic memory text generated from fixed specifications
&
Same 28-position scaffold; \(M\) and \(M'\) differ only within
\(I_a\); 25\% of pairs contain overlapping protected groups
&
320 \(M/M'\) pairs \(\times\) 32 queries
\newline
\(=10{,}240\) paired query turns
\\

\textbf{Corpus-sourced transfer}
&
Public dialogue histories containing first-person disclosures
&
Same PairedMem scaffold and shared background; \(M'\) removes only
the target signal
&
80 \(M/M'\) pairs \(\times\) 32 queries
\newline
\(=2{,}560\) paired query turns
\\
\bottomrule
\end{tabular}
\end{table*}

\subsection{PairedMem: Controlled Synthetic Track}
\label{app:pairedmem}

PairedMem contains 320 adjacent-store pairs generated from fixed
specifications. Table~\ref{tab:pairedmem-pipeline} summarizes the four
construction steps.

\begin{table*}[h]
\centering
\small
\caption{PairedMem construction.}
\label{tab:pairedmem-pipeline}
\setlength{\tabcolsep}{5pt}
\renewcommand{\arraystretch}{1.15}
\begin{tabular}{@{}
p{0.12\textwidth}
p{0.47\textwidth}
p{0.37\textwidth}
@{}}
\toprule
\textbf{Step}
&
\textbf{What is done}
&
\textbf{Generation / validation}
\\
\midrule

\textbf{S1}
\newline Pair design
&
Define paired-store specifications with the same 28-position
scaffold. Each pair may differ only within one protected group
\(I_a\).
&
Code creates the specifications and pair-level split
\\

\textbf{S2}
\newline Memory writing
&
Convert each specification into natural-language memory sentences
for \(M\) and \(M'\).
&
Mistral-7B writes the sentence wording
\\

\textbf{S3}
\newline Pair validation
&
Verify that \(M\) contains the target evidence, \(M'\) removes it,
the stores agree outside \(I_a\), and non-target evidence is preserved.
&
Claude Sonnet 4.6 checks whether the
text matches the specification and reads naturally
\\

\textbf{S4}
\newline Queries and gold
&
Create one 32-query trajectory and deterministic gold response
behaviors for each pair.
&
Mistral-7B writes the queries, and Claude Sonnet 4.6 checks each
trajectory
\\

\bottomrule
\end{tabular}
\end{table*}

\paragraph{Why paired synthetic stores?}
No public corpus provides two stores with the same scaffold,
non-target memories, and query trajectory while differing only within
one protected group \(I_a\). Comparing unrelated users would also
change writing style and background facts. We therefore generated the
paired stores directly.

\paragraph{S1: Pair design.}
We created 384 pair specifications, with 96 for each of four protected
attributes: type-2 diabetes, pregnancy, severe unsecured debt, and
recent involuntary job loss. Each store has 28 fixed positions in six
slots, and each \(I_a\) contains 4--6 positions. The paired stores
agree outside \(I_a\). In 25\% of the specifications, protected groups
overlap.

\paragraph{S2: Text realization.}
Each specification fixes the memory positions, slots, target evidence,
and non-target clauses that must remain unchanged. Mistral-7B converts
the specification into natural-language memory sentences. All
resulting text is synthetic and contains no real personal information.

\paragraph{S3: Validation and selection.}
Code checks that \(M\) and \(M'\) have the same scaffold, differ only
within \(I_a\), contain the required target evidence on the correct
side, and preserve non-target evidence at overlapping positions.
Claude Sonnet 4.6 separately checks whether each sentence matches its
specification and reads naturally. Of the 384 candidate pairs, 381
passed validation. We selected 320 using fixed attribute, template,
and overlap quotas and retained the remaining 61 as reserves.

\paragraph{S4: Queries and deterministic gold.}
Each selected pair has a 32-query trajectory consisting of two
disjoint 16-query blocks. Each block contains eight target-relevant,
four same-domain neutral, and four distractor queries. The query
generator receives the query type, slot, and public intent, but not
\(M\) or \(M'\). Code determines the read sets, charge sets,
applicable view vocabularies, and gold response constraints from the
fixed policy. The same trajectory is used for \(M\), \(M'\), and
every compared interface.

\subsection{Controlled Corpus-Sourced Transfer}
\label{app:corpus-transfer}

The transfer track contains 80 adjacent-store pairs built from public
corpus text. We keep the PairedMem scaffold, paired difference,
queries, and gold rules fixed and replace only the memory sentences.
Table~\ref{tab:corpus-transfer-pipeline} summarizes the four
construction steps.

\begin{table*}[h]
\centering
\small
\caption{Controlled corpus-sourced transfer construction.}
\label{tab:corpus-transfer-pipeline}
\setlength{\tabcolsep}{4.5pt}
\renewcommand{\arraystretch}{1.4}
\begin{tabular}{@{}
p{0.15\textwidth}
p{0.43\textwidth}
p{0.39\textwidth}
@{}}
\toprule
\textbf{Step}
&
\textbf{What is done}
&
\textbf{Generation / validation}
\\
\midrule

\textbf{T1}
\newline Pair skeleton
&
Copy a matching PairedMem scaffold for each history--attribute
combination. Keep the positions, slots, protected groups, and target
positions fixed.
&
Code creates 80 pair specifications and assigns the same background
to \(M\) and \(M'\)
\\

\textbf{T2}
\newline Memory sourcing
&
Replace the synthetic memory sentences with corpus-sourced background
and target evidence. Create \(M'\) by removing only the target signal.
&
Memory text is drawn from public corpora; \(M'\) is formed by minimally
editing the corresponding evidence sentences
\\

\textbf{T3}
\newline Pair validation
&
Verify the shared scaffold, equality outside \(I_a\), target removal,
slot compatibility, and preservation of non-target evidence.
&
Each sentence and its edited counterpart are manually screened; code
checks the pair structure and gold consistency
\\

\textbf{T4}
\newline Frozen evaluation
&
Reuse the PairedMem queries, router, read and charge policies, scorer,
view vocabularies, and gold rules.
&
Code applies the fixed protocol
\\

\bottomrule
\end{tabular}
\end{table*}

\paragraph{T1: Pair skeleton.}
We form 80 pair specifications by combining 20 background histories
with four protected attributes. Each specification copies the
28-position, six-slot scaffold and target positions from a matching
PairedMem pair. The background is identical in \(M\) and \(M'\), and
the two stores differ only within \(I_a\).

\paragraph{T2: Memory sourcing.}
Memory text comes from three public datasets. LoCoMo persona and event
histories provide the background text
\cite{maharana2024locomo}. The \texttt{all-processed} configuration
of the Medical Question Answering Datasets collection provides
first-person evidence for diabetes and pregnancy, as well as some debt
and job-loss evidence \cite{malikehmedicalqa}. EmpatheticDialogues
provides additional debt and job-loss evidence
\cite{rashkin2019towards}.

For each evidence sentence in \(M\), the corresponding sentence in
\(M'\) is a minimal edit that removes only the target-attribute signal
while preserving the topic, sentence structure, and any non-target
evidence. Thus, the paired sentences retain similar wording while
differing in the target signal.

\paragraph{T3: Pair validation.}
We retain only first-person statements about the speaker's current
state. We exclude resolved past states, hypothetical statements,
metaphors, and statements about other people. Each pair is checked for
the shared scaffold, equality outside \(I_a\), target removal, slot
compatibility, and preservation of non-target evidence. All 80 pairs
pass these checks.

\paragraph{T4: Frozen evaluation.}
Each pair reuses the 32-query trajectory of its matching PairedMem
pair, yielding 2,560 paired query turns. The router, read and charge
policies, scorer, view vocabularies, response models, and deterministic
gold rules remain unchanged. 

\subsection{Compared Interface Implementations}
\label{app:compared-interfaces}

Table~\ref{tab:interface-implementations} summarizes the input passed
to the response LLM, how each method constructs that input, and whether
it uses a DP mechanism. All methods use the same queries, response
model, router, and output limit.

\begin{table*}[h]
\centering
\scriptsize
\caption{Implementation of the compared interfaces and whether each
uses a DP mechanism. \(K\) is the online budget granularity, and
\(C_a\) is the planned number of charged selections for attribute
\(a\).}
\label{tab:interface-implementations}
\setlength{\tabcolsep}{4.5pt}
\renewcommand{\arraystretch}{1.12}
\begin{tabular}{@{}
p{0.16\textwidth}
p{0.15\textwidth}
p{0.53\textwidth}
p{0.08\textwidth}
@{}}
\toprule
\textbf{Method}
&
\textbf{Response-LLM input}
&
\textbf{How the input is formed}
&
\textbf{Uses DP}
\\
\midrule

\multicolumn{4}{@{}l}{\textbf{Memory-independent}}\\[-1mm]

GenericOnly
&
Fixed generic view
&
Uses the same fixed slot-level generic string under \(M\) and \(M'\).
&
\multicolumn{1}{c}{No}
\\

\addlinespace[1pt]
\multicolumn{4}{@{}l}{\textbf{Memory text}}\\[-1mm]

RawReadSet
&
Raw read-set text
&
Passes the memory items at \(\mathcal R_{t,g}\), ordered by position.
&
\multicolumn{1}{c}{No}
\\

TypedMask
\newline
\emph{(adaptation of MemPrivacy)}
&
Item-masked read set
&
Replaces protected read items with slot-typed placeholders and passes
the remaining items unchanged.
&
\multicolumn{1}{c}{No}
\\

TaskMin
\newline
\emph{(adaptation of AirGapAgent)}
&
Fixed task-family subset
&
Passes the fixed union of positions used by target-relevant and
same-domain query families; distractor-only positions are excluded.
&
\multicolumn{1}{c}{No}
\\

OutputFilter
&
Raw read-set text during generation
&
Generates from the RawReadSet input and then deletes explicit protected
labels and sensitive-state phrases.
&
\multicolumn{1}{c}{No}
\\

\addlinespace[1pt]
\multicolumn{4}{@{}l}{\textbf{Proposed modes}}\\[-1mm]

\textbf{DP-MemView (on)}
&
EM-sampled public view or generic fallback
&
Uses the shared view vocabulary and scorer, with
\(\varepsilon_{t,g}
=\min_{a\in\Gamma_{t,g}}B_a/K\), complete charging, ledgers, and a
prospective cap check.
&
\multicolumn{1}{c}{Yes}
\\

\textbf{DP-MemView (pre)}
&
EM-sampled public view or generic fallback
&
Uses the same view vocabulary, scorer, charging, ledgers, and cap
check, with
\(\varepsilon_{t,g}
=\min_{a\in\Gamma_{t,g}}B_a/C_a\).
&
\multicolumn{1}{c}{Yes}
\\

\bottomrule
\end{tabular}
\end{table*}

\subsection{Evaluation Protocol}
\label{app:evaluation-protocol}

\paragraph{Models and execution environment.}
Table~\ref{tab:eval-models} lists the models used for response
generation and evaluation. For each response model, all compared
interfaces use the same decoding configuration.

\begin{table*}[h]
\centering
\scriptsize
\setlength{\tabcolsep}{3pt}
\renewcommand{\arraystretch}{1.05}
\caption{Models used for response generation and evaluation.}
\label{tab:eval-models}
\begin{tabular}{@{}
p{0.23\textwidth}
p{0.70\textwidth}
@{}}
\toprule
\textbf{Role} & \textbf{Model} \\
\midrule

Response generation
&
Qwen2.5-7B-Instruct, Llama-3.1-8B-Instruct, and Gemma-2-9B-it
\\

Embedding auditor
&
Frozen \texttt{sentence-transformers/all-mpnet-base-v2} encoder
with calibrated logistic regression
\\

Pairwise privacy auditor
&
Mistral-Small-24B-Instruct-2501 
\\

Utility judge
&
Claude Sonnet 4.6
\\

\bottomrule
\end{tabular}
\end{table*}

Response generation, embedding auditing, and the pairwise LLM audit
were run in Google Colab using an NVIDIA A100 GPU. Claude Sonnet 4.6
was accessed through its API for utility judging.

\paragraph{Embedding privacy auditing.}
Our primary privacy auditor uses frozen response embeddings. For each
response model and transcript prefix, we embed each assistant response,
average the turn embeddings, and \(L_2\)-normalize the resulting prefix
vector. For each protected attribute and prefix length, we train a
separate logistic-regression auditor to distinguish transcripts from
\(M\) and \(M'\). The auditor is fitted on PairedMem training pairs,
sigmoid-calibrated on validation pairs, and evaluated on held-out test
pairs. The same fitted auditor is applied to the corpus-sourced transfer
track without refitting or recalibration. We report macro-AUC across
the four protected attributes and TPR at 5\% FPR. 

\paragraph{Utility judging.}
At \(T=16\), the blinded judge receives the query--response transcript
and a rubric generated from the fixed gold constraints for that store
side. The rubric is fixed before response evaluation and is shared
across all compared interfaces for the same pair and store side. The
judge does not receive the method name or the \(M/M'\) label.

The judge assigns 0--4 scores for relevance, correctness,
actionability, and personalization. It also scores each applicable
gold behavior on the same 0--4 scale. All scores are normalized by
dividing by 4. The mean of the four normalized holistic scores defines
\(U\).

For \(M\), \(\mathrm{tRec}\) is the mean normalized score over
target-required behaviors. For \(M'\), \(\mathrm{Unsup}\) is the mean
normalized score over target-specific behaviors that are unsupported
by that store. Thus, higher \(\mathrm{tRec}\) indicates better use of
supported target information, whereas lower \(\mathrm{Unsup}\)
indicates less unsupported target-specific personalization. 

\paragraph{Experimental view scorer and sensitivity.}
DP-MemView does not require a particular score function. Any scorer
with values in \([0,1]\) may be used if it has a valid sensitivity
bound satisfying Eq.~\eqref{eq:score-sensitivity}. The experiments use
the following fixed scorer:
\[
u_{t,g}(q_t,v;M_{\mathcal R_{t,g}})
=
0.4\,r(q_t,v)
+
0.4\,c(v;M_{\mathcal R_{t,g}})
+
0.2\,s(v;M_{\mathcal R_{t,g}}).
\]
Here, \(r,c,s\in[0,1]\). The relevance term \(r\) depends only on
the public query and candidate view. The compatibility term \(c\)
measures whether the view matches the read-set memory, and the
specificity term \(s\) measures whether the read-set memory supports
the specificity of the view. The scorer uses only the current
read-set memory and does not access the paired store.

For adjacent stores \(M\sim_a M'\), fix the same query and candidate
view. The relevance term is then identical under both stores. Writing
\(u_M(v)\) and \(u_{M'}(v)\) for the two scores,
\[
\begin{aligned}
|u_M(v)-u_{M'}(v)|
&\le
0.4\,|c_M(v)-c_{M'}(v)|
+
0.2\,|s_M(v)-s_{M'}(v)|
\\
&\le
0.4+0.2
=
0.6.
\end{aligned}
\]
Therefore,
\[
\left\|
\mathbf u_{t,g}(M)-\mathbf u_{t,g}(M')
\right\|_\infty
\le 0.6,
\]
and the experiments set \(\Delta u_{t,g}=0.6\).

\paragraph{Runs and aggregation.}
GenericOnly, RawReadSet, TypedMask, TaskMin, OutputFilter, and
CappedArgmax are deterministic and are run once per response model.
DP-MemView (on), DP-MemView (pre), IncompleteCharge, and NoCap use
EM seeds \(\{0,1,2\}\). For stochastic methods, measurements are
averaged across seeds within each adjacent pair before aggregation
over pairs.

\end{document}